\documentclass[runningheads]{llncs}
\usepackage[T1]{fontenc}
\usepackage{graphicx}
\usepackage{amsmath}
\usepackage{amssymb}
\usepackage{array}
\usepackage[shortlabels]{enumitem}
\usepackage{hyperref}
\usepackage[nameinlink]{cleveref}
\usepackage{color}

\def\sp{\mathsf{\#P}}
\def\gapp{\mathsf{GapP}}

\def\spanp{\mathsf{SpanP}}

\def\np{\mathsf{NP}}
\def\conp{\mathsf{coNP}}

\def\spp{\mathsf{SPP}}
\def\pp{\mathsf{PP}}

\def\fp{\mathsf{FP}}

\def\p{\mathsf{P}}
\def\npsv{\mathsf{NPSV_{t}}}

\def\u{\mathsf{U}}
\def\s{\mathsf{S}}

\def\optp{\mathsf{OptP}}
\def\maxp{\mathsf{MaxP}}
\def\minp{\mathsf{MinP}}
\def\midp{\mathsf{MidP}}
\def\medp{\mathsf{MedP}}
\def\medpb{\mathsf{\overline{MedP}}}
\def\ktho{\mathsf{KthOutput}}
\def\kthos{\mathsf{KthOutput_{seq}}}
\def\kthw{\mathsf{KthWitness}}
\def\maxmcp{\mathsf{MaxMCP}}

\newtheorem{Theorem}{Theorem}[section]
\newtheorem{Lemma}[Theorem]{Lemma}
\newtheorem{Proposition}[Theorem]{Proposition}
\newtheorem{Definition}[Theorem]{Definition}
\newtheorem{Corollary}[Theorem]{Corollary}

\begin{document}
	\title{On the Complexity of Computing Outputs of a Metric Turing Machine}

	%
	%\titlerunning{Abbreviated paper title}
	% If the paper title is too long for the running head, you can set
	% an abbreviated paper title here
	%
	\author{Yaroslav Ivanashev} %\orcidID{0009-0001-5059-1844}} %\and
		%Second Author\inst{2,3}\orcidID{1111-2222-3333-4444} \and
		%Third Author\inst{3}\orcidID{2222--3333-4444-5555}}
	%
	\authorrunning{Y. Ivanashev}
	% First names are abbreviated in the running head.
	% If there are more than two authors, 'et al.' is used.
	%
	
	\institute{HSE University, Faculty of Computer Science, Moscow, 101000, Russia
		\email{yivanashev@hse.ru}}
	%\institute{Princeton University, Princeton NJ 08544, USA \and
		%Springer Heidelberg, Tiergartenstr. 17, 69121 Heidelberg, Germany
		%\email{lncs@springer.com}\\
		%\url{http://www.springer.com/gp/computer-science/lncs} \and
		%ABC Institute, Rupert-Karls-University Heidelberg, Heidelberg, Germany\\
		%\email{\{abc,lncs\}@uni-heidelberg.de}}
	%
	\maketitle              % typeset the header of the contribution
	\begin{abstract}
		
		The classes $\midp$, $\medp$, and $\medpb$ contain functions that compute the median solution for certain types of problems. In this paper, for these classes we introduce analogous classes of functions that compute the $k$-th solution, where $k$ is an order function that depends on the input. We prove that the classes $\midp$, $\medp$, and $\medpb$ are polynomial-time 1-Turing inter-reducible with the corresponding classes, where the order function is from  $\fp$ or $\fp^{\sp}$. For $\medp$ we also prove that it coincides with the corresponding classes, where the order function is from $\fp$ or $\sp$. 
		For several inclusions between function classes we give equivalent inclusions between language classes. In particular, we establish inclusion relations between $\maxp$ and median classes $\midp$, $\medp$, and $\medpb$. We also prove that  $\npsv \subseteq \maxp \subseteq \fp^{\np}$ and both inclusions are proper if and only if $\np \neq \conp$.

		\keywords{computational complexity \and optimization problems \and metric Turing machine \and complexity of computing medians \and lexicographically k-th output.} 
		
	\end{abstract}

	\section{Introduction}
	
	Krentel introduced the class $\optp$ \cite{kre88} that consists of functions that are equal to the maximum or minimum output of a nondeterministic polynomial-time bounded Turing Machine (NPTM) that outputs an integer on every computation path (a NPTM with this property is called metric). He also proved that $\fp^{\optp[1]} = \fp^{\np}$ and studied complete problems for $\fp^{\np}$ under polynomial-time 1-Turing reductions. Toda defined a class $\midp$ that consists of functions that are equal to the median output of a metric NPTM \cite{tod94}, proved that $\fp^{\midp[1]} = \fp^{\sp}$, and studied
	complete problems for $\fp^{\sp}$ under polynomial-time 1-Turing reductions.
	Vollmer and Wagner defined two related median classes: $\medp$ is defined as $\midp$ except that the multiplicities of outputs are taken into account, and $\medpb$ consists of functions that are equal to the median witness of an NP problem \cite{vol93}. They also showed that these classes are polynomial-time 1-Turing inter-reducible: $\fp^{\midp[1]} = \fp^{\medp[1]} = \fp^{\medpb[1]} = \fp^{\sp}$. 
	In \Cref{sec-fpsp}, we give a simplified proof of this result. 
	Vollmer in \cite{vol94} studied complete problems under parsimonious reductions for the classes $\midp$, $\medp$, and $\medpb$.
	
	In this paper, for the median classes $\midp$, $\medp$, and $\medpb$ we define their analogs $\ktho(C)$, $\kthos(C)$, and $\kthw(C)$, which consist of functions that are equal to the $k$-th output or witness (instead of median output or witness) for a function $k \in C$. Toda used a class $\mathsf{KthP}$ \cite{tod94}, which is similar to $\ktho(\fp)$ but contains partial functions that take two arguments. We use classes of total functions of one argument in order to compare them with $\midp$, $\medp$, and $\medpb$. In fact, $\mathsf{KthP}$ would be a subset of $\ktho(\fp)$ if it contained only total functions, so $\ktho(\fp)$ can be considered as a generalization of $\mathsf{KthP}$ that admits different order functions $k \in \fp$. For example, the problem KthSAT from \cite{tod94} belongs to the classes $\ktho(\fp)$ and $\kthw(\fp)$ (the assignments of formulas should be encoded as integers).
	In \Cref{sec-mid}, for the classes $\medpb$, $\midp$, and $\medp$ we show that they are polynomial-time 1-Turing inter-reducible with their corresponding classes, where the function $k$ is either in $\fp$ or $\fp^{\sp}$: $\fp^{\sp} = \fp^{\medpb[1]} = \fp^{\kthw(\fp)[1]} = \fp^{\kthw(\fp^{\sp})[1]} = \fp^{\midp[1]} = \fp^{\ktho(\fp)[1]} = \fp^{\ktho(\fp^{\sp})[1]} = \fp^{\medp[1]} = \fp^{\kthos(\fp)[1]}\\ = \fp^{\kthos(\fp^{\sp})[1]}$. In \Cref{sec-med}, for the class $\medp$ we prove that it coincides with the corresponding classes $\kthos(\fp)$ and $\kthos(\sp)$.
	
	In \Cref{sec-inc}, for several inclusions between function classes we give equivalent inclusion between language classes. In \cite{vol93}, Vollmer and Wagner proved that $\medpb \subseteq \midp \cap \medp$ and the other inclusions between these classes have implausible implications. For some of these implications we show that they hold in the opposite direction. We prove that $\fp^{\np \cap \conp} \subseteq \maxp \subseteq \fp^{\np}$ and both inclusions are proper if and only if $\np \neq \conp$. We also prove that $\maxp$ contains the classes $\midp$, $\medp$, and $\medpb$ if and only if $\pp = \np$, and $\medpb$ contains the classes $\spanp$ and $\maxp$ if and only if $\np \subseteq \spp$.
		
	\section{Preliminaries}
	
	In this paper, we use a standard definition of a nondeterministic polynomial-time bounded Turing machine (NPTM) and binary alphabet $\Sigma = \{0, 1\}$. We assume that the reader is familiar with basic complexity theory notions and classes $\np$, $\pp$, $\spp$ \cite{fen94},  $\sp$ \cite{val79}, $\gapp$ \cite{fen94}, $\spanp$ \cite{kob89}. 
	
	\begin{Definition} [\cite{kre88}]
		A NPTM is called metric if on every branch it outputs a binary number from $\mathbb{Z}$ and accepts.
	\end{Definition}
	
	In the definition above, the outputs of a NPTM can be negative numbers (in some reasonable encoding). 
	
	Krentel defined the class $\optp$ \cite{kre88} as a union of the classes $\minp$ and $\maxp$ that we define below. 
	
	\begin{Definition}
		\textnormal{\cite{kre88}} $\maxp = \{max_{M} \ | \ M$ is a metric NPTM$\}$, where $max_{M}(x)$ is the maximum value in a set of all outputs of $M$ on input $x \in \Sigma^{*}$.	More precisely, if $\{w_{1}, \dots, w_{k}\}$ is the set of all outputs of $M$ on input $x$ and if $w_{1} < \ldots < w_{k}$, then $max_{M}(x) = w_{k}$.
	\end{Definition}
	
	The class $\minp$ is defined analogously. In this paper, we will use only the class $\maxp$. All results that we prove for this class also hold for the class $\minp$.
		
	Below we give definitions of median classes that we will consider in this paper.
	
	\begin{Definition}
		\begin{enumerate}
			\item  \textnormal{\cite{tod94}} $\midp = \{mid_{M} \ | \ M$ is a metric NPTM$\}$, where $mid_{M}(x)$ is the median value in a set of all outputs of $M$ on input $x \in \Sigma^{*}$.	More precisely, if $\{w_{1}, \dots, w_{k}\}$ is the set of all outputs of $M$ on input $x$ and if $w_{1} < \ldots < w_{k}$, then $mid_{M}(x) = w_{\lfloor(k+1)/2\rfloor}$.
			\item  \textnormal{\cite{vol93}} $\medp = \{med_{M} \ | \ M$ is a metric NPTM$\}$, where $med_{M}(x)$ is the median value in a sequence of all outputs of $M$ on input $x \in \Sigma^{*}$. More precisely, if $w_{1} \leqslant \ldots \leqslant w_{k}$ is the sequence of all outputs (with multiplicities) of $M$ on input $x$, then $med_{M}(x) = w_{\lfloor(k+1)/2\rfloor}$.		
			\item  \textnormal{\cite{vol93}} The class $\medpb$ consists of all functions $f$ for which there exist a language $A \in \p$ and functions $l, r \in \fp$ ($\Sigma^{*} \rightarrow \mathbb{Z}$) such that for every $x \in \Sigma^{*}$:
			\begin{align*}
				f(x) = \begin{cases}
					y_{\lfloor(k+1)/2\rfloor}, &\text{if the set $W$ is not empty,}\\
					l(x) - 1, &\text{$otherwise$,}
				\end{cases}
			\end{align*}
			where $W = \{y \in \mathbb{Z} \ | \ l(x) \leqslant y \leqslant r(x) \wedge (x, y) \in A\} = \{y_{1}, \dots, y_{k}\}$ and $y_{1} < \ldots < y_{k}$.
		\end{enumerate}
	\end{Definition}
	
	In \cite{vol93}, Vollmer and Wagner proved that $\medpb$ is a subset of $\midp$ and  $\medp$.
	\Cref{med-inc} shows that the other inclusions between these classes are not likely to hold.
	
	\begin{Proposition} [\cite{vol93}]
		$\medpb \subseteq \midp \cap \medp$.
	\end{Proposition}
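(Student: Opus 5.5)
To prove $\medpb \subseteq \midp \cap \medp$ we fix $f \in \medpb$, given by $A \in \p$ and $l, r \in \fp$. We build one metric NPTM $M$ whose set of outputs, and whose sequence of outputs with multiplicities, both have $f(x)$ as their median. The natural approach is to let $M$ guess $y$ with $l(x) \leqslant y \leqslant r(x)$ and output $y$ if $(x,y) \in A$. Two issues must be handled. First, paths with $(x,y)\notin A$ must output something. Second, the empty case $W=\emptyset$ must yield $l(x)-1$.

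On input $x$, $M$ computes $l(x), r(x)$ and nondeterministically guesses $y$ in the interval $[l(x), r(x)]$. The interval has at most exponential length, so we take a padded range of binary strings of polynomial length and let each guess determine either a value $y$ in the interval or a rejected-path marker. We arrange the branching tree so that every $y$ is guessed on exactly one path. If $(x,y)\in A$, $M$ outputs $y$. Otherwise the path must output a harmless value that does not shift the median. The key trick is to make the bad paths balanced: pair each useless path with two outputs, one very small and one very large.

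Concretely, on a bad path $M$ branches once more, outputting $l(x)-1$ on one branch and $r(x)+1$ on the other. Each good $y$ also branches into two paths, both outputting $y$, so that all paths have uniform structure. With multiplicities, the sequence then consists of $2k$ copies of the good values (each value twice), $m$ copies of $l(x)-1$ and $m$ copies of $r(x)+1$. Its lower median (index $\lfloor(2k+2m+1)/2\rfloor = k+m$) is the $k$-th element among the doubled good values, i.e.\ $y_{\lceil k/2\rceil}$. This is not quite $y_{\lfloor (k+1)/2\rfloor}$, but the two agree: $\lceil k/2\rceil=\lfloor(k+1)/2\rfloor$. The boundary index must be checked carefully, and I expect this bookkeeping to be the main obstacle. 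If $k=0$, every output is $l(x)-1$ or $r(x)+1$ in equal numbers, so the median is $l(x)-1$, as required. To guarantee at least one output in every case, we add one fixed extra pair of paths outputting $l(x)-1$ and $r(x)+1$, which keeps the balance.

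For $\midp$, the set of distinct outputs is $\{y_1,\dots,y_k\}$ together with possibly $l(x)-1$ and $r(x)+1$. The extra pair makes both values always present, so the set is $\{l(x)-1, y_1,\dots,y_k, r(x)+1\}$ with $k+2$ elements. Its median index is $\lfloor (k+3)/2\rfloor = \lfloor(k+1)/2\rfloor + 1$, which picks out $y_{\lfloor(k+1)/2\rfloor}$ when $k\geqslant 1$ and $l(x)-1$ when $k=0$. Thus $mid_M = f$ and $med_M = f$, and the same machine witnesses both memberships. The only points to verify are that $M$ is polynomial-time and that every path outputs an integer, which is immediate since $l, r \in \fp$ and $A\in\p$.
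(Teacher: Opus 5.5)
Your construction is the natural one and works except in one edge case. The paper gives no proof of this proposition; it only cites Vollmer and Wagner.

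\textbf{The failing case.} The definition of $\medpb$ does not require $l(x) \leqslant r(x)$. Suppose $r(x) \leqslant l(x)-3$. Then $W=\emptyset$ and $f(x)=l(x)-1$. Every path of your $M$ is a bad path, so the outputs are $m$ copies of $r(x)+1$ and $m$ copies of $l(x)-1$, with $r(x)+1 < l(x)-1$.
\begin{itemize}
\item The lower median of this sequence (index $m$) is $r(x)+1$.
\item The lower median of the set $\{r(x)+1,\, l(x)-1\}$ (index $\lfloor 3/2\rfloor = 1$) is also $r(x)+1$.
\end{itemize}
For instance, $l \equiv 0$, $r \equiv -10$, $A=\emptyset$ gives $f \equiv -1$, but $mid_M = med_M \equiv -9$. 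Your sentence ``in equal numbers, so the median is $l(x)-1$'' tacitly assumes $l(x)-1 \leqslant r(x)+1$. Similarly, your claim that the $\midp$ output set always has $k+2$ elements is false when $r(x)=l(x)-2$, since then $l(x)-1=r(x)+1$; the conclusion happens to survive in that case.

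\textbf{The fix.} Use $R(x)=\max(l(x),r(x))+1$ instead of $r(x)+1$ as the large padding value. Equivalently, first replace $r$ by $\max(r, l-1)$, which leaves $W$ unchanged. Then $l(x)-1 < y_1 < \dots < y_k < R(x)$ always holds, including when $k=0$. With that change, your index computations for both the multiset median and the set median go through verbatim.

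\textbf{A side remark.} Doubling the good paths is harmless but unnecessary. Without it the sequence has $k+2m$ entries, and the median index $m+\lfloor (k+1)/2\rfloor$ already lands on $y_{\lfloor (k+1)/2\rfloor}$.
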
 
	
	For the median classes above we define their analogs, where the functions compute outputs or witnesses of some specific order.
	
	\begin{Definition}
		Let $C$ be a class of functions $\Sigma^{*} \rightarrow \mathbb{N}$.
		\begin{enumerate}
			\item The class $\ktho(C)$ consists of all functions $f$ for which there exist a metric NPTM $M$ and a function $g \in C$ such that for every $x \in \Sigma^{*}$:
			\begin{align*}
				f(x) = \begin{cases}
					w_{g(x)}, &\text{if $M$ has at least $g(x)$ outputs on input $x$,}\\
					2^{p(|x|)}, &\text{$otherwise$,}
				\end{cases}
			\end{align*}
			where $\{w_{1}, \dots, w_{k}\}$ is the set of all outputs of $M$ on input $x$ and $w_{1} < \ldots < w_{k}$, and $p$ is a polynomial that on every input is greater than the running time of $M$.
			\item The class $\kthos(C)$ consists of all functions $f$ for which there exist a metric NPTM $M$ and a function $g \in C$ such that for every $x \in \Sigma^{*}$:
			\begin{align*}
				f(x) = \begin{cases}
					w_{g(x)}, &\text{if $M$ has at least $g(x)$ paths on input $x$,}\\
					2^{p(|x|)}, &\text{$otherwise$,}
				\end{cases}
			\end{align*}
			where $w_{1} \leqslant \ldots \leqslant w_{k}$ is the sequence of all outputs (with multiplicities) of $M$ on input $x$, and $p$ is a polynomial that on every input is greater than the running time of $M$.
			\item The class $\kthw(C)$ consists of all functions $f$ for which there exist a language $A \in \p$, functions $l, r \in \fp$ ($\Sigma^{*} \rightarrow \mathbb{Z}$), and a function $g \in C$ such that for every $x \in \Sigma^{*}$:
			\begin{align*}
				f(x) = \begin{cases}
					y_{g(x)}, &\text{if $|W| \geqslant g(x)$,}\\
					l(x) - 1, &\text{$otherwise$,}
				\end{cases}
			\end{align*}
			where $W = \{y \in \mathbb{Z} \ | \ l(x) \leqslant y \leqslant r(x) \wedge (x, y) \in A\} = \{y_{1}, \dots, y_{k}\}$ and $y_{1} < \ldots < y_{k}$. 
		\end{enumerate}
	\end{Definition}
	
	\begin{Definition} [\cite{boo84,boo85}]
		The class $\npsv$ consists of all functions $f$ such that there exists a NPTM that on every input $x \in \Sigma^{*}$ has at least one accepting path and outputs the value of $f(x)$ on every accepting path.
	\end{Definition}
	
	The proof of \Cref{npsv-fp} is given in \cite[Proposition 2.4]{boo85}. 
	
	\begin{Proposition} \label[Proposition]{npsv-fp}
		\textnormal{\cite{sel94}} $\npsv = \fp^{\np \cap \conp}$.
	\end{Proposition}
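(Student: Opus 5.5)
We prove the two inclusions $\npsv \subseteq \fp^{\np \cap \conp}$ and $\fp^{\np \cap \conp} \subseteq \npsv$ separately. Throughout, $\npsv$ functions are total and single-valued, with an NPTM witnessing each value on every accepting path.

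\emph{From $\npsv$ to $\fp^{\np\cap\conp}$.} Let $N$ be an NPTM witnessing $f \in \npsv$, and let $q$ be a polynomial bounding its running time, so that $|f(x)| \leqslant q(|x|)$. Define the prefix language
\[
B = \{(x,u) \mid u \text{ is a prefix of } f(x)\},
\]
adding a marker so that we can also ask whether $|f(x)| > |u|$.

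\begin{enumerate}[(i)]
\item $B \in \np$: guess an accepting path of $N$ on $x$ and check that its output extends $u$.
\item $B \in \conp$: guess an accepting path of $N$ on $x$ and check that its output does \emph{not} extend $u$. This uses totality and single-valuedness: every accepting path outputs the same value $f(x)$, and at least one accepting path exists. Hence the complement of $B$ is in $\np$.
\item A deterministic polynomial-time machine then recovers $f(x)$ bit by bit with oracle $B$, making at most $2q(|x|)+1$ queries (for each position, ask whether the output continues and which bit comes next).
\end{enumerate}

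\emph{From $\fp^{\np\cap\conp}$ to $\npsv$.} Let $f$ be computed by a deterministic polynomial-time transducer $T$ with oracle $L \in \np\cap\conp$. Fix NPTMs $N_1$ for $L$ and $N_0$ for $\overline{L}$. The new NPTM $N'$ on input $x$ simulates $T$. Each time $T$ makes a query $z$, $N'$ nondeterministically guesses an answer $b \in \{0,1\}$ together with an accepting computation of $N_b$ on $z$, rejecting if the guessed computation is not accepting. Every surviving path has therefore answered every query correctly, since an accepting path of $N_b$ certifies $z \in L$ for $b=1$ and $z \notin L$ for $b=0$. It follows that all accepting paths reproduce the genuine run of $T^L$ and output $f(x)$. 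Since $L \in \np\cap\conp$, each query has a certificate for its true answer, so at least one accepting path exists. Finally, the number of queries is polynomial and each certificate is polynomial in the length of $z$, which is itself polynomially bounded, so $N'$ runs in polynomial time.

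Neither direction presents a real obstacle. The only point needing care is step (ii) of the first direction, where placing $B$ in $\conp$ relies essentially on $f$ being total and single-valued; for partial or multivalued functions the argument breaks down. The rest is bookkeeping with the encoding of binary outputs, including variable lengths and the sign of integers.
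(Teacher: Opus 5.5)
Your proof is correct. $B$ lands in $\np\cap\conp$ because every input has an accepting path and all accepting paths agree, and the converse simulation, which guesses each oracle answer with a certificate from $N_1$ or $N_0$, is both sound and complete. The paper gives no argument of its own and defers to \cite[Proposition 2.4]{boo85}; your two-direction argument (prefix search over an $\np\cap\conp$ language, and guessing certified oracle answers) is the standard one for this equality.
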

	
	In this paper, we assume that the class $\npsv$ contains only integer-valued functions in order to compare it with other classes of integer-valued functions.
	
	\section{Characterization of $\fp^{\sp}$} \label{sec-fpsp}
	
	In \cite{tod94}, Toda gave a characterization of the class $\fp^{\sp}$ using $\midp$ functions: $\fp^{\sp} = \fp^{\midp[1]}$. Vollmer and Wagner used the same technique to show that the class $\medpb$ can be used in this characterization \cite{vol93}. We give a simplified version of this proof below.
	
	\begin{Theorem} [\cite{vol93}] \label{fpsp}
		$\fp^{\sp} = \fp^{\medpb[1]}$.
	\end{Theorem}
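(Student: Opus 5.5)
We prove the two inclusions separately. For $\fp^{\medpb[1]} \subseteq \fp^{\sp}$, it suffices to show $\medpb \subseteq \fp^{\sp}$. Then the single oracle query can be answered by a polynomial-time machine with a $\sp$ oracle, and $\fp^{\fp^{\sp}} = \fp^{\sp}$.

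\emph{Computing a $\medpb$ function with $\sp$ queries.} Let $f \in \medpb$ be given by $A \in \p$ and $l, r \in \fp$. For an integer $m$, the function
\[
c(x,m) = |\{y \in \mathbb{Z} \ | \ l(x) \leqslant y \leqslant \min(m, r(x)) \wedge (x,y) \in A\}|
\]
is in $\sp$: an NPTM guesses $y$ in the polynomially long interval and accepts iff $y$ lies in range and $(x,y) \in A$. The algorithm proceeds as follows.
\begin{enumerate}
\item Query $k = c(x, r(x)) = |W|$. If $k = 0$, output $l(x) - 1$.
\item Otherwise set $t = \lfloor (k+1)/2 \rfloor$.
\item Binary search over $[l(x), r(x)]$ for the least $m$ with $c(x,m) \geqslant t$. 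This $m$ is exactly $y_{t} = f(x)$.
\end{enumerate}
The interval has length $2^{poly(|x|)}$, so the search uses polynomially many queries.

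\emph{Reverse inclusion.} Here I would invoke Toda's result that $\fp^{\sp} = \fp^{\sp[1]}$: any $\fp^{\sp}$ computation can be simulated with one $\sp$ query plus polynomial-time pre- and post-processing. Citing it is the intended shortcut; the main obstacle would otherwise be handling adaptive queries. With it, it suffices to show that for each $h \in \sp$ there is $f \in \medpb$ such that $h(x)$ is polynomial-time computable from $x$ and $f(x)$.

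\emph{Encoding a $\sp$ function as a median.} Write $h(x) = |\{y \ | \ 0 \leqslant y < N \wedge (x,y) \in B\}|$ with $B \in \p$ and $N = 2^{q(|x|)}$, so $h(x) \leqslant N$. Define $A$ so that $W$ consists of two parts:
\begin{enumerate}
\item doubled genuine witnesses: $2y$ and $2y+1$ for each $y < N$ with $(x,y) \in B$, giving $2h(x)$ elements in $[0, 2N)$;
\item padding: all integers in $[2N, 4N)$, giving $2N$ elements.
\end{enumerate}
Set $l(x) = 0$ and $r(x) = 4N - 1$. Then $A \in \p$, $l, r \in \fp$, and $|W| = 2h(x) + 2N$, so the median index is $\lfloor (2h(x)+2N+1)/2 \rfloor = h(x) + N$. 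This index exceeds $2h(x)$ because $h(x) \leqslant N$ (with equality $h = N$ the index equals $2N = 2h$; I would then pad with $[2N, 4N+2)$ or otherwise adjust slightly to get strict separation). Hence the median is the padding element of rank $N - h(x)$ among the padding, namely $2N + N - h(x) - 1$. So $h(x) = 3N - 1 - f(x)$ is recovered in polynomial time.

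The doubling is the key trick: it makes the median index depend on $h(x)$ without losing the parity bit to the floor. The only delicate point is checking the boundary cases of this index arithmetic, especially $h(x) = N$. That completes $\fp^{\sp} = \fp^{\sp[1]} \subseteq \fp^{\medpb[1]}$.
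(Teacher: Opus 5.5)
\textbf{There is a genuine gap: the reverse inclusion rests on a result that is not available.} Your inclusion $\fp^{\medpb[1]} \subseteq \fp^{\sp}$ is fine; it is the paper's binary search.

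The reverse inclusion, however, depends on the claim $\fp^{\sp} = \fp^{\sp[1]}$, and that is not a known theorem. Toda's theorem gives $\mathsf{PH} \subseteq \mathsf{P}^{\sp[1]}$. To my knowledge, whether polynomially many \emph{adaptive} $\sp$ (equivalently $\pp$) queries can be collapsed to a single $\sp$ query is open.

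You can see why the shortcut cannot be available. Your own encoding already gives $\fp^{\sp[1]} \subseteq \fp^{\medpb[1]}$ in a few lines, so with $\fp^{\sp} = \fp^{\sp[1]}$ in hand the theorem would be trivial. Its point, and the point of Toda's result for $\midp$, is that the median classes are $1$-Turing complete for $\fp^{\sp}$, while $\sp$ itself is not known to be. Indeed, given the theorem, $\fp^{\sp} = \fp^{\sp[1]}$ is equivalent to $\medpb \subseteq \fp^{\sp[1]}$. So the step you set aside as ``the main obstacle'' is the content of the statement, and what you have actually proved is only $\fp^{\sp[1]} \subseteq \fp^{\medpb[1]}$. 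As an aside, that encoding is already correct at $h(x)=N$: then $W=[0,4N)$, the median is $2N-1$, and $3N-1-f(x)=N$, so no extra padding is needed.

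\textbf{The missing idea is to let one median absorb the adaptivity, as the paper does.}
\begin{enumerate}
\item Use $\fp^{\sp}=\fp^{\pp}$. Make the base machine ask exactly $p(n)$ queries, all of length $q(n)$. Normalize the $\pp$ machine $N$ by \Cref{pp} so that it has the same odd number $T$ of paths on every query.
\item Take as witnesses the fixed-length strings $a_1q_1w_1\cdots a_pq_pw_p$. Here $q_i$ is the query asked when the previous answers are $a_1,\dots,a_{i-1}$ (not necessarily the true ones), $w_i$ is a path of $N$ on $q_i$, and $a_i=1$ iff $w_i$ accepts. This is a polynomial-time condition.
\item Every consistent prefix through level $i$ has exactly $S=T^{p-i}$ completions. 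So in a block with a fixed prefix through level $i-1$, the $a_i=0$ part (of size $R_iS$) precedes the $a_i=1$ part (of size $A_iS$), and $A_i-R_i$ is odd.
\item If $A_i-R_i=2j+1>0$, the median of the block sits at position $jS+(S+1)/2$ of the $a_i=1$ part. That is exactly the median of one of its sub-blocks, which carries the correct $a_i$. The case $R_i>A_i$ is analogous.
\item By induction on $i$, the median of all $T^p$ witnesses contains the correct answers to all queries. The base machine's output is then computed from it in polynomial time.
\end{enumerate}
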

	
	\begin{proof}
		\begin{enumerate}
			\item $\medpb \subseteq \fp^{\sp}$: the middle witness can be found by binary search on the set of witnesses. 
			\item $\fp^{\sp} \subseteq \fp^{\medpb[1]}$: It suffices to prove $\fp^{\pp} \subseteq \fp^{\medpb[1]}$, because $\fp^{\sp} = \fp^{\pp}$. Let $M$ be a polynomial-time bounded Turing machine with an oracle $L \in \pp$, and $N$ be a NPTM that witnesses $L$ and satisfies the conditions of \Cref{pp}. The machine $M$ can be modified so that on every input of length $n$ it has only  queries of length $q(n)$ and has exactly $p(n)$ queries, where $p$ and $q$ are polynomials. In this proof, we will use binary strings of fixed polynomial length to represent integer witnesses from the definition of $\medpb$. This set of witnesses for every $x \in \Sigma^{*}$ contains strings of the form $a_{1}q_{1}w_{1} \ldots a_{p(|x|)}q_{p(|x|)}w_{p(|x|)}$, where for every $i$, $a_{i}$ is an oracle answer to the $i$-th query $q_{i}$ of $M$ on input $x$ and $w_{i}$ is a path of $N$ on input $q_{i}$. For every $i$ the $i$-th query $q_{i}$ is determined by the previous answers $a_{1}, \ldots, a_{i-1}$, which can be different from the correct answers. All paths $w_{i}$ should be encoded so that they had the same polynomial length.  Moreover, this set contains only strings of this form, where for every $i$, $a_{i} = 1$ if and only if $w_{i}$ is an accepting path. The middle witness from this set contains the correct answer $a_{1}$, because the machine $N$ has more accepting paths than rejecting paths on the input $q_{1}$ if and only if the correct answer is 1. The second answer $a_{2}$ in the middle witness is also correct, because the second query $q_{2}$ is determined correctly in the middle witness, and every prefix $a_{1}q_{1}w_{1}$ can be extended to the same number of prefixes $a_{1}q_{1}w_{1}a_{2}q_{2}w_{2}$. By the same argument, the middle witness contains all correct oracle answers.
			\qed
		\end{enumerate}
	\end{proof}
	
	\begin{Lemma} \label[Lemma]{pp}
		For every language $L \in \pp$ there exist a NPTM $M$ and a function $f\in \fp$ such that for every $x \in \Sigma^{*}$:
		\begin{enumerate}
			\item $x \in L \Leftrightarrow$ $M$ has more accepting paths than rejecting paths on input $x$.
			\item The number of paths of $M$ on input $x$ is odd and depends only on the length of $x$.
			\item $f(x)$ equals the number of paths of $M$ on input $x$.
		\end{enumerate}
	\end{Lemma}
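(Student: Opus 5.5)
Since $L \in \pp$, there are a NPTM $N$ and a polynomial $t$ such that every computation path of $N$ on inputs of length $n$ has length exactly $t(n)$, and $x \in L$ iff more than half of the $2^{t(|x|)}$ paths of $N$ on $x$ accept. The plan is to normalize $N$ in three steps: make the tree a full binary tree of depth depending only on $|x|$, replace the threshold ``more than half'' by the strict comparison ``more accepting than rejecting'' on a tree with an odd number of leaves, and read off the leaf count in polynomial time.

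First, pad $N$ so that on every input of length $n$ it makes exactly $t(n)$ binary nondeterministic choices. Every branch then has the same length and the tree has exactly $2^{t(n)}$ leaves. Let $a(x)$ be the number of accepting paths and $r(x) = 2^{t(|x|)} - a(x)$ the number of rejecting paths. With the convention that $x \in L \Leftrightarrow a(x) > 2^{t(|x|)-1}$, we have $x \in L \Leftrightarrow a(x) > r(x)$.

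Next, make the count odd while keeping the strict comparison. Define $M$ to branch nondeterministically at the start. With one choice it simulates $N$ on $x$ twice in sequence, making $2t(n)+1$ choices in total, so each path of $N$ is duplicated. A path of this copy accepts iff the simulated path of $N$ accepts. With the other choice, $M$ creates exactly one extra path that rejects. The total number of paths is $2\cdot 2^{t(n)} + 1$, which is odd and depends only on $n$. The accepting count is $2a(x)$ and the rejecting count is $2r(x)+1$.

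The step I expect to be the main obstacle is checking that the strict inequality is preserved exactly at the boundary. We have $2a > 2r+1 \Leftrightarrow a > r + 1/2 \Leftrightarrow a > r$, because $a$ and $r$ are integers. So condition~1 holds, and adding the odd rejecting path cannot flip the answer in the tie case $a = r$. Had the original definition of $\pp$ used ``at least half'', I would instead make the extra path accepting and use $2a+1 > 2r \Leftrightarrow a \geqslant r$. Either way the parity trick costs nothing.

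Finally, take $f(x) = 2^{t(|x|)+1} + 1$. Its binary representation has polynomial length, and it can be computed in polynomial time from $|x|$ by evaluating $t$, so $f \in \fp$ and condition~3 holds. Condition~2 holds by construction, since all branches are padded to fixed length. A different-length side branch is harmless because the number of paths, not their lengths, is what is counted.
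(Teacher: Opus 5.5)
Your proof is correct and is essentially the paper's argument (pad to equal-length paths, then add one rejecting path); your duplication step makes the boundary case a matter of integrality, $2a > 2r+1 \Leftrightarrow a > r$, whereas the paper's shorter version implicitly relies on $a+r=2^{t(n)}$ being even (for $t(n)\geqslant 1$). One slip to fix: duplicating each path of $N$ needs only one extra dummy binary choice ($t(n)+1$ choices on that branch), not two sequential simulations of $N$ with $2t(n)+1$ choices, which would produce $2^{2t(n)}$ paths and contradict the counts $2a(x)$ and $2r(x)+1$ that you then correctly use.
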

	
	\begin{proof}
		Let $L \in \pp$. By definition of $\pp$, there exists a NPTM that satisfies the first condition, and it is well known that this machine can be modified so that on every input all paths have the same number of nondeterministic steps. After that one rejecting path should be added to make the number of paths odd. \qed
	\end{proof}
	
	In \cite{tod94}, Toda proved that $\midp \subseteq \fp^{\sp}$, which follows from the main theorem of \cite{tod92}. This gives the following corollary:
	
	\begin{Corollary} [\cite{tod94,vol93}] \label[Corollary]{1-t}
		$\fp^{\sp} = \fp^{\medpb[1]} = \fp^{\midp[1]} = \fp^{\medp[1]}$.
	\end{Corollary}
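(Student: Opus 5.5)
The plan is to close a cycle of inclusions, using \Cref{fpsp} and the inclusions $\medpb \subseteq \midp \cap \medp$ (from \cite{vol93}) and $\midp \subseteq \fp^{\sp}$ (from \cite{tod94}).

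First, \Cref{fpsp} gives $\fp^{\sp} = \fp^{\medpb[1]}$. Since $\medpb \subseteq \midp$, a single query to a $\medpb$ function can be answered by a single query to a $\midp$ function, so $\fp^{\medpb[1]} \subseteq \fp^{\midp[1]}$. For the upper bound, Toda's inclusion $\midp \subseteq \fp^{\sp}$ lets a machine replace its one $\midp$ query by an $\fp^{\sp}$ computation. This gives $\fp^{\midp[1]} \subseteq \fp^{\fp^{\sp}} = \fp^{\sp}$, and hence $\fp^{\sp} = \fp^{\medpb[1]} = \fp^{\midp[1]}$.

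For $\medp$, the lower bound $\fp^{\medpb[1]} \subseteq \fp^{\medp[1]}$ again follows from $\medpb \subseteq \medp$. For the upper bound I must show $\medp \subseteq \fp^{\sp}$ directly, since no inclusion between $\medp$ and $\midp$ is available. Given a metric NPTM $M$ whose outputs lie in $[-2^{p(|x|)}, 2^{p(|x|)}]$, binary search for the median. For a threshold $t$, the number of paths with output at most $t$ is a $\sp$ function of $(x,t)$: branch as $M$ does and accept exactly when the output is $\le t$. The total number of paths is also a $\sp$ value. The median $med_M(x)$ is the least $t$ for which the count of outputs $\le t$ is at least $\lfloor (k+1)/2 \rfloor$, where $k$ is the total number of paths. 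Polynomially many $\sp$ queries find this $t$, so $\medp \subseteq \fp^{\sp}$ and therefore $\fp^{\medp[1]} \subseteq \fp^{\sp}$.

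The only nonroutine point is the care needed in the threshold construction: outputs are signed binary numbers, and the search interval must be fixed by the running time bound $p$. I do not expect a real obstacle, since the substantive work (\Cref{fpsp} and Toda's result) is already assumed.
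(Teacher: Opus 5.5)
Your proof is correct and matches the paper's argument: the paper obtains the corollary by combining \Cref{fpsp}, the inclusion $\medpb \subseteq \midp \cap \medp$, and Toda's $\midp \subseteq \fp^{\sp}$. Your binary search over $\sp$ counts of paths with output $\leqslant t$, giving $\medp \subseteq \fp^{\sp}$, supplies a step the paper leaves implicit; it needs neither $\spanp$ nor Toda's theorem, because multiplicities are counted.
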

	
	\section{Computing the K$^{\textnormal{\textbf{th}}}$ Witness and the K$^{\textnormal{\textbf{th}}}$ Value in a Set of Outputs} \label{sec-mid}
	
	In this section, we prove that the classes $\medpb$, $\midp$, and $\medp$ are polynomial-time 1-Turing inter-reducible with the corresponding classes, where the order function is either in $\fp$ or $\fp^{\sp}$. 
	
	\begin{Lemma} \label[Lemma]{medpb}
		$\fp^{\medpb[1]} = \fp^{\kthw(\fp)[1]} = \fp^{\kthw(\fp^{\sp})[1]}$.
	\end{Lemma}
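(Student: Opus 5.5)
The proof will close a chain of inclusions. We show
\[
\fp^{\medpb[1]} \subseteq \fp^{\kthw(\fp)[1]} \subseteq \fp^{\kthw(\fp^{\sp})[1]} \subseteq \fp^{\sp} = \fp^{\medpb[1]},
\]
where the last equality is \Cref{fpsp}. The middle inclusion is immediate from $\fp \subseteq \fp^{\sp}$.

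\textbf{First inclusion.} It suffices to show $\medpb \subseteq \fp^{\kthw(\fp)[1]}$; in fact $\medpb \subseteq \kthw(\fp)$ would be nicer. The difficulty is that the median index $\lfloor (|W|+1)/2\rfloor$ depends on $|W|$, which is not in $\fp$. So we pad the witness set to a known size. Let $f \in \medpb$ be given by $A, l, r$, and let $N(x) = r(x)-l(x)+1$ be the number of candidate integers. We build a new set $W'$ of witnesses of the form $(b,y)$, encoded as integers lexicographically with $b \in \{0,1\}$ as the leading component:
\begin{itemize}
\item for every $y\in W$, include both $(0,y)$ and $(1,y)$;
\item for every $y\in[l(x),r(x)]\setminus W$, include $(0,y)$ exactly when $y$ is among the first half of candidates, and $(1,y)$ otherwise.
\end{itemize}
This is not quite right, because membership in $W'$ must be decidable in polynomial time. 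The cleaner choice is to include $(0,y)$ for every non-witness $y$ and $(1,y)$ for every $y\in W$, with the $0$-block encoded as smaller. Then $|W'| = N(x)$ is computable in $\fp$, but the witnesses of $W$ sit at positions $N(x)-|W|+1,\dots,N(x)$, and the median of $W$ is not at a fixed position. So padding by position alone fails, and we fall back to the 1-Turing reduction.

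\textbf{Fallback for the first inclusion.} Since $\fp^{\medpb[1]} = \fp^{\sp}$ by \Cref{fpsp}, it is enough to show $\fp^{\sp} \subseteq \fp^{\kthw(\fp)[1]}$. Reuse the construction in the proof of \Cref{fpsp}: the witness set consists of strings $a_1q_1w_1\ldots a_{p}q_{p}w_{p}$. By \Cref{pp}, the number of paths of $N$ on each query depends only on its length and is given by the function $f\in\fp$. Each query has length $q(n)$, so every prefix $a_1q_1w_1\ldots a_iq_iw_i$ has exactly $f(0^{q(n)})$ extensions to the next block. Hence the total number of witnesses is $f(0^{q(n)})^{p(n)}$, which is computable in polynomial time (it is a number of polynomially many bits). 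Its middle index $g(x) = \lfloor(f(0^{q(n)})^{p(n)}+1)/2\rfloor$ is therefore in $\fp$. The $\kthw(\fp)$ function with this $g$ returns exactly the median witness, and the argument of \Cref{fpsp} shows that it encodes all correct oracle answers. This gives $\fp^{\sp}\subseteq\fp^{\kthw(\fp)[1]}$.

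\textbf{Last inclusion.} We show $\kthw(\fp^{\sp}) \subseteq \fp^{\sp}$, and hence $\fp^{\kthw(\fp^{\sp})[1]} \subseteq \fp^{\sp}$. Given $x$, first compute $g(x)$ with the $\sp$ oracle. Then binary search over $t\in[l(x),r(x)]$ using the $\sp$ function $t \mapsto |\{y\in W : y\le t\}|$ to find the least $t$ with count at least $g(x)$. If $|W|<g(x)$, output $l(x)-1$. The search needs only polynomially many queries because $r(x)-l(x)$ has polynomially many bits.

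\textbf{Main obstacle.} The delicate step is the fallback for the first inclusion: checking that the witness set from \Cref{fpsp} has exactly $f(0^{q(n)})^{p(n)}$ elements. This requires the query strings $q_i$ to be part of the witness, with the check that $q_i$ is the correct next query under the answers $a_1,\dots,a_{i-1}$. It also requires the convention $a_i=1 \Leftrightarrow w_i$ accepting, so that each block contributes exactly $f$ choices. Once that holds, the index of the middle witness is an $\fp$ function.
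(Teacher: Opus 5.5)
Your final argument is correct and is essentially the paper's proof. Both close the cycle through $\fp^{\sp}$ via \Cref{fpsp}. Both get $\kthw(\fp^{\sp}) \subseteq \fp^{\sp}$ by computing $g(x)$ and binary searching with a $\sp$ counting function; the paper explicitly notes that the two $\sp$ oracles can be merged into one, which you use implicitly. Both get $\fp^{\sp} \subseteq \fp^{\kthw(\fp)[1]}$ from the fact that the witness set of \Cref{fpsp} has the $\fp$-computable size $f(0^{q(n)})^{p(n)}$.

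You can delete the abandoned padding detour, though your conclusion that padding fails was premature. Include every $y \in W$ twice and every non-witness once, with all non-witnesses ordered below the witnesses. Then the $(r(x)-l(x)+1)$-th element of the padded set encodes the median of $W$ (or signals $W=\emptyset$), which gives $\medpb \subseteq \fp^{\kthw(\fp)[1]}$ directly.
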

	
	\begin{proof}
		It suffices to prove that $\kthw(\fp^{\sp}) \subseteq \fp^{\sp}$ and $\fp^{\sp} \subseteq$\\ $\fp^{\kthw(\fp)[1]}$, because $\fp^{\sp} = \fp^{\medpb[1]}$ by \Cref{fpsp}.
		\begin{enumerate}
			\item $\kthw(\fp^{\sp}) \subseteq \fp^{\sp}$: as in the first part of \Cref{fpsp}, the $g(x)$-th witness for a function $g \in \fp^{\sp}$ can be found by binary search. To compute the value of $g(x)$ another $\sp$ oracle is needed, but one $\sp$ oracle can be used to compute two $\sp$ functions \cite{iva26}.
			\item $\fp^{\sp} \subseteq \fp^{\kthw(\fp)[1]}$: the proof follows from the second part of \Cref{fpsp}, because the number of witnesses in this proof is polynomial-time computable.
			\qed
		\end{enumerate}
	\end{proof}
	
	The proof of $\fp^{\midp[1]} = \fp^{\ktho(\fp)[1]}$ is given in \cite{tod94} in a different notation. 
	
	\begin{Lemma} \label[Lemma]{midp}
		$\fp^{\midp[1]} = \fp^{\ktho(\fp)[1]} = \fp^{\ktho(\fp^{\sp})[1]}$.
	\end{Lemma}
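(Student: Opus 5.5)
Following the pattern of \Cref{medpb}, it suffices to prove $\ktho(\fp^{\sp}) \subseteq \fp^{\sp}$ and $\fp^{\sp} \subseteq \fp^{\ktho(\fp)[1]}$. By \Cref{1-t}, $\fp^{\midp[1]} = \fp^{\sp}$. Every class in the chain $\fp^{\ktho(\fp)[1]} \subseteq \fp^{\ktho(\fp^{\sp})[1]}$ then sits between $\fp^{\sp}$ and $\fp^{\sp}$, since $\fp \subseteq \fp^{\sp}$ gives $\ktho(\fp) \subseteq \ktho(\fp^{\sp})$ and $\fp^{\fp^{\sp}[1]} = \fp^{\sp}$.

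\textbf{Upper bound.} Let $f \in \ktho(\fp^{\sp})$ be given by $M$ and $g$. First compute $g(x)$ with the $\sp$ oracle. Then binary search over integers $t$ in the range $(-2^{p(|x|)}, 2^{p(|x|)})$, using the quantity
\[
c(x,t) = \bigl|\{w \le t : w \text{ is an output of } M \text{ on } x\}\bigr|.
\]
This quantity is in $\sp$. A machine guesses $w \le t$ of polynomial length, then verifies that some path of $M$ outputs $w$. To keep the count unambiguous, it accepts only if the lexicographically first path producing $w$ is guessed. Being the first path is a $\conp$ condition, so it cannot be checked directly. Instead, view the count of distinct values as $\#\{w \le t : \exists \text{ path outputting } w\}$. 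This is a $\spanp$-type count, and $\fp^{\spanp[1]} \subseteq \fp^{\sp}$ holds via Toda-type arguments ($\spanp \subseteq \fp^{\sp}$ because $\spanp \subseteq \sp^{\np}$ and $\mathsf{PH}\subseteq \p^{\sp}$). Alternatively, we can simply cite $\midp \subseteq \fp^{\sp}$ as in \Cref{1-t}, which is proven the same way, and note that the argument is insensitive to replacing the median index by $g(x)$.

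With the counts available, we find the least $t$ with $c(x,t) \ge g(x)$. If no such $t$ exists, that is, if $c(x,\infty) < g(x)$, we output $2^{p(|x|)}$. All oracle calls (polynomially many, computing $g$ and the counts) go to $\sp$, which suffices since $\fp^{\sp}$ allows polynomially many queries. The main obstacle is this step, counting distinct outputs. I would rely on the same fact behind Toda's $\midp \subseteq \fp^{\sp}$ from \cite{tod92}, namely that $\p^{\sp}$ contains $\np$-type conditions so that $\spanp$ functions are in $\fp^{\sp}$.

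\textbf{Lower bound.} Take the construction in the second part of \Cref{fpsp}. Let $M'$ be a metric NPTM that guesses a string $a_1q_1w_1\ldots$, checks the consistency conditions (each $q_i$ is correctly determined, and $a_i = 1$ iff $w_i$ accepts), and outputs the string read as an integer. On inconsistent paths it outputs a value that repeats a consistent one or lies below all of them, for instance the smallest possible value $-1$, and we shift indices by one accordingly. The set of distinct outputs is then exactly the witness set, together with possibly one extra value. By \Cref{pp}, the witness set has size $\prod_i f(q_i)$. Each $f(q_i)$ depends only on $|q_i| = q(n)$, so the size is $f(0^{q(n)})^{p(n)}$, which is computable in $\fp$. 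Hence the median index $g(x)$ is in $\fp$. A single $\ktho(\fp)$ query returns the middle witness, from which all correct oracle answers, and hence the output of the base machine, are read off in polynomial time.
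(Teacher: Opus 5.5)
Your proposal is correct and follows the paper's proof. It uses the same two inclusions: a binary search whose counting queries are $\spanp$ functions placed in $\fp^{\sp}$ via the main theorem of \cite{tod92}, and the witness construction of \Cref{fpsp} realized by a metric NPTM with an $\fp$-computable order index. Your guess-and-check machine, with a dummy output $-1$ and a shifted index, is a harmless variant of the paper's machine that generates only the witness paths.

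The only blemishes are in the upper bound. First, $c(x,t)$ is a $\spanp$ function, not a $\sp$ function as you first assert. Second, $\spanp \subseteq \fp^{\sp}$ does not follow immediately from $\spanp \subseteq \sp^{\np}$ plus $\mathsf{PH} \subseteq \p^{\sp}$; it needs the stronger form of Toda's theorem in \cite{tod92}, which is what you end up relying on anyway.
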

	
	\begin{proof}
		It suffices to prove that $\ktho(\fp^{\sp}) \subseteq \fp^{\sp}$ and $\fp^{\sp} \subseteq$\\ $\fp^{\ktho(\fp)[1]}$, because $\fp^{\sp} = \fp^{\midp[1]}$ by \Cref{1-t}.
		\begin{enumerate}
			\item $\ktho(\fp^{\sp}) \subseteq \fp^{\sp}$: the proof is analogous to the first part of \Cref{medpb}, except that the binary search on the set of outputs requires queries to $\spanp$. It follows from the main theorem of \cite{tod92} that $\spanp \subseteq \fp^{\sp}$.
			\item $\fp^{\sp} \subseteq \fp^{\ktho(\fp)[1]}$: the proof follows from the second part of \Cref{fpsp}, because there exists a NPTM that outputs all witnesses from this proof on distinct paths without generating other paths, and the number of witnesses is polynomial-time computable.
			\qed
		\end{enumerate}
	\end{proof}
	
	For the class $\medp$ analogous lemma can be proved similarly. The next theorem summarizes results of the last two sections. 
	
	\begin{Theorem} \label{all}
		$\fp^{\sp} = \fp^{\medpb[1]} = \fp^{\kthw(\fp)[1]} = \fp^{\kthw(\fp^{\sp})[1]} = \fp^{\midp[1]} = \fp^{\ktho(\fp)[1]} = \fp^{\ktho(\fp^{\sp})[1]} = \fp^{\medp[1]} = \fp^{\kthos(\fp)[1]} = \fp^{\kthos(\fp^{\sp})[1]}$
	\end{Theorem}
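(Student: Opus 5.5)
The first seven classes are already handled: \Cref{1-t} gives $\fp^{\sp} = \fp^{\medpb[1]} = \fp^{\midp[1]} = \fp^{\medp[1]}$, and \Cref{medpb} and \Cref{midp} add $\fp^{\kthw(\fp)[1]}$, $\fp^{\kthw(\fp^{\sp})[1]}$, $\fp^{\ktho(\fp)[1]}$ and $\fp^{\ktho(\fp^{\sp})[1]}$. What remains is the analogous statement for sequences,
\[
\fp^{\medp[1]} = \fp^{\kthos(\fp)[1]} = \fp^{\kthos(\fp^{\sp})[1]}.
\]
Since $\kthos(\fp) \subseteq \kthos(\fp^{\sp})$ and $\fp^{\medp[1]} = \fp^{\sp}$, I will prove two inclusions, $\kthos(\fp^{\sp}) \subseteq \fp^{\sp}$ and $\fp^{\sp} \subseteq \fp^{\kthos(\fp)[1]}$. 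Together these give $\fp^{\sp} \subseteq \fp^{\kthos(\fp)[1]} \subseteq \fp^{\kthos(\fp^{\sp})[1]} \subseteq \fp^{\fp^{\sp}} = \fp^{\sp}$.

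For $\kthos(\fp^{\sp}) \subseteq \fp^{\sp}$, take $f$ given by a metric NPTM $M$ with polynomial $p$ and order function $g \in \fp^{\sp}$. By the cited result of \cite{iva26}, a single $\sp$ oracle can compute $g(x)$ and also answer the $\sp$ queries below. The total number of paths of $M$ on $x$ is a $\sp$ function, so we can first decide whether it is at least $g(x)$; if not, we output $2^{p(|x|)}$. Otherwise we binary search for the least $t$ with $|[w \leqslant t]| \geqslant g(x)$, where $|[w \leqslant t]|$ counts paths (with multiplicity) whose output is at most $t$. Outputs are bounded in absolute value by $2^{p(|x|)}$, so $t$ ranges over that interval and the search takes polynomially many steps. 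Each step needs the $\sp$ function $(x,t) \mapsto |\{\text{paths with output} \leqslant t\}|$. Here the multiplicities help: unlike $\ktho$, no $\spanp$ is needed, because we count paths, not distinct values.

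For $\fp^{\sp} \subseteq \fp^{\kthos(\fp)[1]}$, I will reuse the construction in the second part of \Cref{fpsp}. Build a metric NPTM that guesses a string $a_1q_1w_1\ldots a_{p(n)}q_{p(n)}w_{p(n)}$ of the admissible form. It checks consistency, that is, $q_i$ is the query determined by $a_1,\ldots,a_{i-1}$ and $a_i=1$ iff $w_i$ accepts, and it outputs the string read as an integer on each consistent branch. On inconsistent branches it must not create extra outputs that shift the order. I will handle these by outputting a value larger than every valid string, say $2^{p(|x|)}-1$ with a suitable padding of the encoding, so that they all sort to the top. Then by \Cref{pp} the number of valid witnesses is $f(|x|)^{p(|x|)}$ and polynomial-time computable, so the order function $g(x)=\lfloor (f(|x|)^{p(|x|)}+1)/2\rfloor$ lies in $\fp$. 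The $g(x)$-th output in the sorted sequence is then the middle valid witness, which by the argument of \Cref{fpsp} encodes all correct oracle answers. The $\fp$ machine makes one query and reads off the answers.

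The main obstacle is the bookkeeping in the second inclusion. Each valid witness must appear on exactly one path, and the junk branches must be pushed strictly above all valid values while staying below the default $2^{p(|x|)}$. If this is awkward, I will instead guess $w_i$ only after deterministically computing $q_i$, so that the machine generates only valid witnesses and no junk branches exist, as asserted for $\ktho$ in \Cref{midp}.
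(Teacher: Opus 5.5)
Your proposal is correct and takes the paper's route: the paper gets the $\kthos$ part from the ``analogous lemma'' to \Cref{medpb} and \Cref{midp}. That lemma consists of $\kthos(\fp^{\sp}) \subseteq \fp^{\sp}$ by binary search, where, as you note, counting paths with output at most $t$ is already a $\sp$ function so no $\spanp$ is needed, and $\fp^{\sp} \subseteq \fp^{\kthos(\fp)[1]}$ via the witness machine of \Cref{fpsp}, which has a polynomial-time computable number of pairwise distinct outputs. The fixes needed are cosmetic: the witness count is $f(0^{q(|x|)})^{p(|x|)}$ rather than $f(|x|)^{p(|x|)}$ because the queries have length $q(|x|)$, and the junk branches can simply output $2^{L}$ for valid encodings of length $L$, since the default case of $\kthos$ never arises here.
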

	
	\section{Computing the K$^{\textnormal{\textbf{th}}}$ Value in a Sequence of Outputs} \label{sec-med}
	
	For the class $\medp$, we give a stronger result than \Cref{all}. In the next theorem, we prove that the classes $\medp$, $\kthos(\fp)$, and $\kthos(\sp)$ are equal. 	
	
	\begin{Definition}
		The class $\maxmcp$ consists of all functions $f$ for which there exist a language $A \in \pp$ and functions $l, r \in \fp$ ($\Sigma^{*} \rightarrow \mathbb{Z}$) such that:
		\begin{enumerate}
		\item For every $x \in \Sigma^{*}$:
		\begin{align*}
			f(x) = \begin{cases}
				y_{k}, &\text{if the set $W$ is not empty,}\\
				l(x) - 1, &\text{$otherwise$,}
			\end{cases}
		\end{align*}
		where $W = \{y \in \mathbb{Z} \ | \ l(x) \leqslant y \leqslant r(x) \wedge (x, y) \in A\} = \{y_{1}, \dots, y_{k}\}$ and $y_{1} < \ldots < y_{k}$. 
		\item For every $x \in \Sigma^{*}$, if $(x, y) \in A$ and $l(x) \leqslant y \leqslant r(x)$, then $(x, y - 1) \in A$.
		\end{enumerate}
	\end{Definition}
	
	A proof of $\medp = \maxmcp$ is sketched in an unpublished manuscript \cite[Lemma 4.10]{vol96}.
	
	\begin{Theorem} \label{medp}
		$\medp = \maxmcp = \kthos(\fp) = \kthos(\sp)$.
	\end{Theorem}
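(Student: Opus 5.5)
I will prove the chain of inclusions $\medp \subseteq \kthos(\fp) \subseteq \kthos(\sp) \subseteq \maxmcp \subseteq \medp$.

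\textbf{First two inclusions.} The inclusion $\medp \subseteq \kthos(\fp)$ needs a normalization of the machine. Given a metric NPTM $M$, I first modify it so that on inputs of length $n$ every path makes exactly $t(n)$ binary nondeterministic choices. Any path that would have branched fewer times is padded by duplicating it, and all copies keep the original output. Duplication changes multiplicities, so it must be done uniformly: I replace each path by $2^{t(n)-d}$ copies, where $d$ is its depth. This is unsafe for medians, because it reweights the paths. Instead I keep $M$ but add padding paths, half with a very small output and half with a very large output. These extra paths shift the median to a computable index, and the number of padding paths is chosen so that the total count is $2^{t(n)}$ in a balanced way.

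More concretely, let $c(x)$ be the number of paths of $M$. I build $M'$ with $c(x)$ original paths plus $2^{T}-c(x)$ paths outputting $-2^{p}$ and $2^{T}-c(x)$ paths outputting $2^{p}$. This is possible by branching on a string $z$ of length $T$ and using a comparison with the lexicographic path order to decide membership. The total number of paths is $2^{T+1}-c(x)$, and the median of $M$ becomes the $(2^{T}-c(x)+\lfloor (c(x)+1)/2\rfloor)$-th output. That index depends on $c(x)$, so it is not in $\fp$. The cleaner approach is therefore to pair each original path with a mirror path, which removes the dependence on $c(x)$.

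\textbf{Third inclusion, $\kthos(\sp) \subseteq \maxmcp$.} Let $f(x)=w_{g(x)}$ with $g\in\sp$. Define $A=\{(x,y) : \#\{\text{paths with output } < y\} < g(x)\}$. This is a comparison of two $\sp$ functions, so $A\in\pp$, using closure of $\gapp$ under subtraction. The set $A$ is downward closed in $y$. Its largest element in $[l,r]$, with $l=-2^{p}$ and $r=2^{p}$, is exactly $w_{g(x)}$, and the overflow case is handled by the $2^{p(|x|)}$ convention. I will have to check the boundary conventions so that $l(x)-1$ and $2^{p(|x|)}$ match, possibly by shifting values.

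\textbf{Last inclusion, $\maxmcp \subseteq \medp$.} This is where I expect the main work. Given $A\in\pp$ downward closed, take $N$ from \Cref{pp} for $A$, with path count $f(x,y)$ depending only on $|x|$ and $|y|$. Pad $y$ to a fixed length so that this count is a single $\fp$ value $m$. The machine $M$ guesses $y\in[l,r]$ and a path of $N$ on $(x,y)$. On accepting paths it outputs $y$; on rejecting paths it outputs $y'$, where $y'$ is chosen by reflection, such as $2r-y$ or a value below $l$, so that the excess of accepting over rejecting paths makes the median fall at the threshold. Because $A$ is downward closed, for each $y\le y^*$ accepting paths exceed rejecting paths. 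A counting argument then shows that the number of outputs $\le y^*$ is at least half while the number $\le y^*-1$ is below half. The precise pairing, which sends rejecting outputs for $y$ to one side and accepting outputs to the other, is the step I expect to be delicate. A fallback, given the reference to \cite[Lemma 4.10]{vol96}, is to let each $y$ contribute its accepting paths at output $y$ and its rejecting paths at output $l-1$ or $r+1$, then balance with dummy paths of computable count; this works because all counts except the acceptance bits are in $\fp$.

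For the first inclusion, I take the index $g(x)=\lfloor(c+1)/2\rfloor$ as is only after making $c(x)$ computable. To do this I add, for each path of $M$, a mirror path of depth at most $t$ that fills the complete tree. Each padded leaf receives output $+2^{p}$ or $-2^{p}$ in matched pairs, by giving the two subtrees of a missing branch opposite signs. With $c=2^{t}$ plus balanced extras, the median index becomes $2^{t-1}$ computed in $\fp$. The inclusion $\kthos(\fp)\subseteq\kthos(\sp)$ is trivial.
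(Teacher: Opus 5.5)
\paragraph{The gap: $\maxmcp\subseteq\medp$.} Your inclusions $\medp\subseteq\kthos(\fp)\subseteq\kthos(\sp)\subseteq\maxmcp$ are sound in substance. The direct argument for $\kthos(\sp)\subseteq\maxmcp$, with $A=\{(x,y) : \#\{\text{outputs}<y\}<g(x)\}$, cleanly replaces the paper's two steps through $\medp$.

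The first inclusion needs a small repair. Matched $\pm 2^{p}$ pairs are impossible when $M$ has an odd number $c$ of paths, since $2^t-c$ is then odd. Instead, first duplicate every path of $M$. Uniform doubling preserves $med_M$, unlike the depth-dependent duplication you rightly rejected. Then add $2^t-c$ outputs $-2^{p}$ and $2^t-c$ outputs $2^{p}$, so the median is the $2^t$-th of $2^{t+1}$ outputs.

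The genuine gap is $\maxmcp\subseteq\medp$, the only step that returns to $\medp$, and your fallback fails there. The number of accepting paths of $N$ on $(x,y)$ is not determined by the bit $[(x,y)\in A]$. The majority condition only puts it on the correct side of $m/2$, so the claim that all other counts are in $\fp$ is false.

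Concretely, send rejecting paths to $l(x)-1$, let $K=r(x)-l(x)+1$, and take $y^*=l(x)$. The number of outputs below $y^*$ is the padding plus $\mathrm{rej}(l)+\sum_{y>l}\mathrm{rej}(y)$. Over admissible acceptance profiles this ranges from $(K-1)(m+1)/2$ to $(K-1)m+(m-1)/2$. Meanwhile at most $m$ paths output $y^*$ and the total number of paths is fixed. For $K,m\ge 3$, no padding that ignores the counts puts the median at $y^*$ in both extreme profiles. Reflecting rejections, or sending them upward, only moves the failure to $y^*$ near $r(x)$. Whenever each $y$ contributes its $N$-paths independently, the cumulative uncertainty is of order $Km$, while only $O(m)$ paths mark $y^*$.

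\paragraph{The paper's binary search does not fix this.} The paper runs a binary search in which $M$ simulates $N$ once per level and accepting paths lead to larger outputs. Do not adopt that argument as it stands, because the same accumulation happens across levels. Suppose every $y\in[l(x),r(x)]$ lies in $A$, and each query on the correct search path is accepted by exactly $2^{p-1}+1$ of the $2^{p}+1$ paths. Then at most about a quarter of all paths output $f(x)=r(x)$, and every other path outputs a smaller value, so the median is wrong. The same machine is reused for $\maxmcp\subseteq\kthos(\fp)$, so that step is affected too.

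The majority argument of \Cref{fpsp} survives only because there the path $w_i$ is part of the witness and separates $a_i$ from $a_{i+1}$ in the order. With an odd number of paths, this places the median exactly in the middle of a block at every level. A metric machine whose output must be $y$ itself cannot keep $w_i$. Projecting encoded outputs onto the decision bits is not monotone in that order, so it does not preserve medians.

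Closing $\maxmcp\subseteq\medp$ therefore needs an idea that neither your proposal nor the paper supplies. What you can establish is weaker: your first three steps, combined with the paper's padding argument for $\kthos(\sp)\subseteq\medp$, give $\medp=\kthos(\fp)=\kthos(\sp)\subseteq\maxmcp$.
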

	
	\begin{proof}
		\begin{enumerate}
			\item $\kthos(\sp) \subseteq \medp$: Let $f \in \kthos(\sp)$ and $g \in \sp$ from the definition of $\kthos(\sp)$. To construct the required function from $\medp$, for every $x \in \Sigma^{*}$ the sequence of outputs $w_{1} \leqslant \ldots \leqslant w_{k}$ will be extended so that the $g(x)$-th output in the initial sequence was the middle output in the extended sequence. To the sequence $w_{1} \leqslant \ldots \leqslant w_{k}$ $k$ smaller numbers should be added to the left and the number $2^{p(|x|)}$ with multiplicity $2g(x)$ should be added to the right, where $p$ is a fixed polynomial from the definition of $\kthos$.
			\item $\medp \subseteq \maxmcp$: Let $h \in \medp$, then $h(x) = med_{M}(x)$ for a metric NPTM $M$. There exist functions $l, r \in \fp$ such that for every $x \in \Sigma^{*}$ all outputs of $M$ on input $x$ are in the interval $[l(x), r(x)]$. For every $x \in \Sigma^{*}$, let $S(x)$ be a set $\{(s, t) \ | \ s$ is a path of $M$ on input $x$ that outputs $t\}$. Let $f$ and $g$ be functions such that for every $x \in \Sigma^{*}$ and $y \in \mathbb{Z}$ such that $l(x) \leqslant y \leqslant r(x)$, $f(x, y)$ equals the number of elements $(s, t) \in S(x)$ such that $t < y$, and $g(x, y)$ equals the number of elements $(s, t) \in S(x)$ such that $t \geqslant y$. The functions $f$ and $g$ are in $\sp$, thus the language $A = \{(x, y) \ | \ f(x, y) < g(x, y)\}$ is in $\pp$. The median value in the sequence of outputs of $M$ on input $x$ equals the maximal element in the set $\{y \in \mathbb{Z} \ | \ l(x) \leqslant y \leqslant r(x) \wedge (x, y) \in A\}$, and this set has the second property from the definition of $\maxmcp$.		
			\item $\maxmcp \subseteq \medp$: Let $f \in \maxmcp$, and $A \in \pp$, $l, r \in \fp$ be defined as in the definition of $\maxmcp$. Let $N$ be a NPTM that witnesses $A$ and satisfies the conditions of \Cref{pp}. $N$ also should be modified so that for every $x \in \Sigma^{*}$ and $y \in \mathbb{Z}$ such that $l(x) \leqslant y \leqslant r(x)$ it has $2^{p(|x|)} + 1$ paths on input $(x, y)$ for a polynomial $p$. To define the required function from $\medp$, we will describe a NPTM $M$ that on every input $x \in \Sigma^{*}$ makes a binary search on the set $\{y \in \mathbb{Z} \ | \ l(x) \leqslant y \leqslant r(x)\}$.
			On input $x \in \Sigma^{*}$, $M$ first simulates the machine $N$ on input $(x, \lfloor (l(x)+r(x))/2 \rfloor)$. If $N$ accepts, $M$ simulates $N$ on input 
			$(x, \lfloor (l(x)+3r(x))/4 \rfloor)$. If $N$ rejects, $M$ simulates $N$ on input $(x, \lfloor (3l(x)+r(x))/4 \rfloor)$. $M$ continues binary search until it has the same number of simulations on every path, and outputs $y$ if the last simulation was on input $(x, y)$. If $N$ rejects the input $(x, l(x))$, $M$ outputs $l(x) - 1$. At every simulation of $N$, more than a half of paths of $N$ terminate with a correct decision of $(x, y) \in A$, and accepting paths always lead to larger outputs than rejecting paths. Therefore, the path with the middle output has correct decisions of $(x, y) \in A$, and outputs the value of $f(x)$.
			\item $\maxmcp \subseteq \kthos(\fp)$: follows from the proof of $\maxmcp \subseteq \medp$, because the number of paths of $M$ is polynomial-time computable. \qed
		\end{enumerate}
	\end{proof}
	
	\section{Inclusions Between Function Classes} \label{sec-inc}
	
	In this section, we give characterizations for several inclusions between function classes in terms of inclusions between language classes. 
	
	Below we give definitions of operators $\u$ and $\s$ that map classes of functions to classes of languages.
	
	\begin{Definition} [\cite{vol93}]
		Let $F$ be a class of functions $\Sigma^{*} \rightarrow \mathbb{Z}$. 
		\begin{enumerate}
			\item The class $\u \cdot F$ consists of all languages $L$ such that there exists a function $f \in F$ such that for every $x \in \Sigma^{*}$:
			\begin{align*}
				x \in L \Leftrightarrow f(x) = 1, \\
				x \notin L \Leftrightarrow f(x) = 0.
			\end{align*}
			\item The class $\s \cdot F$ consists of all languages $L$ such that there exist functions $f \in F$ and $g \in \fp$  such that for every $x \in \Sigma^{*}$:
			\begin{align*}
				x \in L \Leftrightarrow f(x) > g(x).
			\end{align*}
		\end{enumerate}
	\end{Definition}
	
	\begin{Proposition} [\cite{vol93}] \label[Proposition]{oper}
		\begin{enumerate} 
			\item $\u \cdot \medpb = \spp$, $\s \cdot \medpb = \pp$.
			\item $\u \cdot \medp = \pp$, $\s \cdot \medp = \pp$.
			\item $\u \cdot \midp = \spp^{\np}$, $\s \cdot \midp = \pp^{\np}$.
		\end{enumerate}
	\end{Proposition}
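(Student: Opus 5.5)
We verify each of the six equalities by two inclusions. Throughout, we use two tools. First, the closure properties of $\gapp$: sums, products, and $\binom{g}{2}$, $\binom{g}{3}$ for $g \in \gapp$. Second, the standard facts that $\spp = \{L \mid \exists h \in \gapp:\ h(x) = 1 \text{ if } x \in L,\ h(x) = 0 \text{ otherwise}\}$ and that $\pp$ is closed under comparing two $\sp$ functions. The median conditions will be rewritten as such count comparisons, and conversely median machines will be built from $\pp$/$\spp$ machines normalized as in \Cref{pp}.

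\emph{$\medpb$.} For the inclusion $\u \cdot \medpb \subseteq \spp$, fix $A, l, r$ and write $a_c = |\{y \in W : y < c\}|$ and $b_c = |\{y \in W : y > c\}|$. Both are $\sp$ functions, since $W$ is decidable in $\p$.
\begin{itemize}
\item Whether $0 \in W$ and whether $1 \in W$ is decidable in $\p$. If $1 \notin W$, then $x \notin L$; if $0 \notin W$, then $x \in L$.
\item Otherwise let $d = b_1 - a_1 \in \gapp$. A direct count with the median index $\lfloor(k+1)/2\rfloor$ gives $d \in \{0,1\}$ when the median is $1$, and $d \in \{-2,-1\}$ when the median is $0$.
\item The $\gapp$ function $h = \binom{d+2}{2} - 2\binom{d+2}{3}$ takes value $1$ on $d \in \{0,1\}$ and $0$ on $d \in \{-2,-1\}$.
\end{itemize}
Combining these cases by $\p$-computable selection gives an $\spp$ gap.

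For the inclusion $\spp \subseteq \u \cdot \medpb$, take a machine whose gap $\mathrm{acc}-\mathrm{rej}$ lies in $\{0,1\}$. Define $W$ to consist of:
\begin{itemize}
\item $0$ and $1$;
\item $2 + w$ for every accepting path $w$;
\item $-1 - w$ for every rejecting path $w$.
\end{itemize}
Then $k = \mathrm{acc} + \mathrm{rej} + 2$, and $1$ has index $\mathrm{rej}+2$. If the gap is $1$, then $\lfloor(k+1)/2\rfloor = \mathrm{rej}+2$ and the median is $1$. If the gap is $0$, the median index is $\mathrm{rej}+1$ and the median is $0$.

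For $\s \cdot \medpb = \pp$, the median exceeds $g(x)$ iff $|\{y \in W : y \leqslant g(x)\}| < \lfloor(|W|+1)/2\rfloor$. This compares two $\sp$ functions, so it is a $\pp$ predicate. The converse uses the same $W$ as above with threshold $g \equiv 0$, after normalizing the $\pp$ machine by \Cref{pp} so that $\mathrm{acc} - \mathrm{rej}$ is odd. Then $\mathrm{acc} > \mathrm{rej}$ forces the median to be $1$.

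\emph{$\medp$.} Both $\u \cdot \medp$ and $\s \cdot \medp$ lie in $\pp$, because $med_M(x) > c$ iff $\#\{\text{paths with output} \leqslant c\} < \#\{\text{paths with output} > c\}$ up to the parity of the index. The reverse inclusion $\pp \subseteq \u \cdot \medp$ takes the machine $N$ of \Cref{pp} and outputs $1$ on accepting paths and $0$ on rejecting ones. Since the total number of paths is odd, the median is $1$ iff $\mathrm{acc} > \mathrm{rej}$. The same machine gives $\s \cdot \medp \supseteq \pp$. Alternatively, all of this follows from $\medpb \subseteq \medp$ and the $\medpb$ part.

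\emph{$\midp$.} For the inclusions into $\spp^{\np}$ and $\pp^{\np}$, the number of distinct outputs of $M$ below $c$ is a $\sp^{\np}$ function: guess $y < c$ and accept iff the $\np$ oracle confirms that some path of $M$ outputs $y$. The $\medpb$ arguments above then relativize verbatim to an $\np$ oracle, including the $\gapp$ polynomial trick.

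The reverse inclusions $\spp^{\np} \subseteq \u \cdot \midp$ and $\pp^{\np} \subseteq \s \cdot \midp$ are the main obstacle. One would like to relativize the $\medpb$ construction, but the witness set $W$ then becomes $\p^{\np}$-decidable, and a metric NPTM cannot directly produce exactly the set $W$ as its set of distinct outputs. My plan is to let each path guess a candidate $y$ together with the oracle answers of the $\p^{\np}$ computations, checking every ``yes'' answer with an $\np$ certificate. Each output would be encoded as $y$ concatenated with the answer vector, ordered so that only the lexicographically maximal consistent answer vector for each $y$ matters; as in Krentel's $\maxp$ arguments, this vector is the correct one.

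The difficulty is that incorrect answer vectors also produce distinct outputs, which perturbs the median. I would neutralize this by adding symmetric padding outputs, paired so that every spurious value below the intended median is balanced by one above it. I expect this balancing to be where the real work lies, and it essentially reconstructs the argument of \cite{vol93}.
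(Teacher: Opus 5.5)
The paper gives no proof of this proposition; it is quoted from \cite{vol93}, so your argument has to stand on its own. The $\medpb$ and $\medp$ parts, and the upper bounds $\u\cdot\midp\subseteq\spp^{\np}$ and $\s\cdot\midp\subseteq\pp^{\np}$, are sound. The case $d=b_1-a_1\in\{0,1\}$ versus $\{-2,-1\}$, the $\gapp$ polynomial, the witness set built from accepting and rejecting paths, and the counting comparisons all check out. Two small repairs are needed:
\begin{itemize}
\item Your $\p$-rules ignore $W=\emptyset$. For $l(x)=2$ one has $f(x)=1$ although $1\notin W$; for $l(x)=1$ and $W=\emptyset$ one has $f(x)=0$ although $0\notin W$. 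So branch on the $\fp$ value $l(x)$ first, and likewise treat $g(x)<l(x)-1$ separately in $\s\cdot\medpb\subseteq\pp$.
\item $\medpb\subseteq\medp$ yields only $\pp\subseteq\s\cdot\medp$, not $\pp\subseteq\u\cdot\medp$, since $\u\cdot\medpb=\spp$. Your direct construction is really needed there.
\end{itemize}

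The genuine gap is the pair of lower bounds $\spp^{\np}\subseteq\u\cdot\midp$ and $\pp^{\np}\subseteq\s\cdot\midp$. You leave these as a plan, and the plan as stated cannot work. Lexicographic maximality of the correct answer vector $a^*(w)$ makes the other consistent vectors irrelevant for a maximum, but not for the median of a set, where every distinct value counts. You also cannot pair only the spurious values with partners on the other side, because no path can tell whether its vector is spurious. Padding every vector symmetrically cancels the correct one as well, and spurious vectors may accept or reject independently of $a^*(w)$.

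The missing idea is to make spurious vectors produce no new values at all. After verifying with certificates that $a$ is consistent for the path $w$ of the oracle machine $N$, branch over all $a'\leqslant_{\mathrm{lex}}a$:
\begin{itemize}
\item for $a'<_{\mathrm{lex}}a$, output $\langle w,a',0\rangle$ and $\langle w,a',1\rangle$;
\item for $a'=a$, output $\langle w,a,0\rangle$, and also $\langle w,a,1\rangle$ only if $N$ accepts $w$ under $a$.
\end{itemize}
Every consistent vector is $\leqslant_{\mathrm{lex}}a^*(w)$. Hence the union over $a$ depends only on $a^*(w)$ and has $2\,\mathrm{val}(a^*(w))+1+[w\text{ accepts}]$ elements. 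The same construction with ``rejects'' gives $2\,\mathrm{val}(a^*(w))+1+[w\text{ rejects}]$ elements.

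Now let $2^p$ be the number of paths of $N$, and fully enumerate an interval $I$ of $K=2^{p+2}$ integers. Place the ``rejects'' set, each element doubled, below $I$, and the ``accepts'' set, doubled, above $I$; failed verifications output a value in $I$. If $s_1,s_2$ are the sizes of the accepts and rejects sets, the median is the element of $I$ at position $s_1-s_2+\lfloor(K+1)/2\rfloor$. The unknown rank terms cancel, and after a shift this gives $\gapp^{\np}\subseteq\midp$. Applying this to $\chi_L$ for $L\in\spp^{\np}$, and to the gap function $h$ with $L=\{x\mid h(x)>0\}$ for $L\in\pp^{\np}$, gives both inclusions.

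Alternatively, cite $\sp^{\np}-\fp\subseteq\midp$ from \cite{vol93}, which this paper also uses later, together with the relativized identity $\gapp^{\np}=\sp^{\np}-\fp$.
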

	
	In \cite{vol93}, it is proved that $\midp$ and $\medp$ contain $\medpb$. \Cref{med-inc} gives characterizations for other inclusions between these classes. The implications from left to right were proved in \cite{vol93}. For the statements on the right there exist separating oracles that also give separations for the statements on the left \cite{vol93}. 
	
	\begin{Theorem} \label{med-inc}
		\begin{enumerate}
			\item If $\midp = \medpb$, then $\np \subseteq \spp$.
			\item $\medp = \medpb$ if and only if $\pp = \spp$.
			\item $\medp \subseteq \midp$ if and only if $\pp \subseteq \spp^{\np}$. 
			\item $\midp \subseteq \medp$ if and only if $\pp^{\np} = \pp$.
		\end{enumerate}
	\end{Theorem}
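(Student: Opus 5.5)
The plan is to use \Cref{oper} for every forward implication, and to build explicit $\medpb$, $\midp$ or $\medp$ machines for the three reverse implications in items 2--4. Item 1 is only a forward implication.

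\textbf{Forward implications.} An equality or inclusion of function classes transfers through the operators $\u\cdot$ and $\s\cdot$.
\begin{itemize}
\item[(1)] If $\midp=\medpb$, then $\spp^{\np}=\u\cdot\midp=\u\cdot\medpb=\spp$. Hence $\np\subseteq\spp$.
\item[(2)] If $\medp=\medpb$, then $\pp=\u\cdot\medp=\u\cdot\medpb=\spp$.
\item[(3)] If $\medp\subseteq\midp$, then $\pp=\u\cdot\medp\subseteq\u\cdot\midp=\spp^{\np}$.
\item[(4)] If $\midp\subseteq\medp$, then $\pp^{\np}=\s\cdot\midp\subseteq\s\cdot\medp=\pp$.
\end{itemize}

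\textbf{Reverse of (2).} Assume $\pp=\spp$ and take $h\in\medp$. By \Cref{medp}, $h\in\maxmcp$, so $h$ is the maximum of a downward-closed set $\{y\in[l(x),r(x)] : (x,y)\in A\}$ with $A\in\pp=\spp$. Downward closure gives
\[
h(x)=l(x)-1+\#\{y\in[l(x),r(x)] : (x,y)\in A\}.
\]
I would then argue as in the second part of \Cref{fpsp}. The witnesses are strings $a_1q_1w_1\ldots$ that encode a binary search on $y$, with the answer bits checked against paths of the $\spp$ machine. In an $\spp$ machine the gap is $0$ or $1$, so the numbers of consistent witnesses are exactly controlled, and this is what forces the median witness to encode the correct search outcome. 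A simpler route may work: with $A\in\spp$, the witness set $\{(y,w) : w$ a gap-contributing path$\}$ has $\gapp$-exact cardinality. I would aim for a $\medpb$ witness set in which each $y\le h(x)$ contributes a fixed odd weight and each other $y$ contributes a fixed even weight. Choosing the witness set so that its median lands exactly at $h(x)$ is the delicate step.

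\textbf{Reverse of (3).} Assume $\pp\subseteq\spp^{\np}$ and take $h\in\maxmcp$ with $A\in\pp\subseteq\spp^{\np}$. Let $N$ be the $\spp^{\np}$ machine for $A$. I would build a $\midp$ machine that performs the binary search for the maximum $y$, simulating $N$ at each step and guessing the $\np$ oracle answers together with certificates. Paths that make wrong ``yes'' guesses are discarded; wrong ``no'' guesses are dominated by encoding the answers so that outputs with more ``yes'' answers are larger. This is the standard technique behind $\u\cdot\midp=\spp^{\np}$. The set-median then recovers the correct branch, exactly as the multiset median does in the proof of $\maxmcp\subseteq\medp$. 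The step that needs checking is that duplicate outputs cannot bias the set-median.

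\textbf{Reverse of (4).} Assume $\pp^{\np}=\pp$. By Toda's bound $\midp\subseteq\fp^{\sp}$, the bits of $h\in\midp$ can be computed by binary search with $\spanp$-type queries, and these are $\pp^{\np}$ predicates. Concretely, the set $\{(x,y) : \mathrm{mid}_M(x)\ge y\}$ is in $\pp^{\np}$: one compares the number of distinct outputs $\ge y$ with the number of distinct outputs $<y$, and both counts are $\spanp$ functions. This set is therefore in $\pp$ and is downward closed in $y$, so $h\in\maxmcp=\medp$ by \Cref{medp}.

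I expect the main obstacle to be the reverse of (2). Item (4) is immediate from \Cref{medp}, and item (3) follows a known template, but in item (2) the witness set must be set up so that its median falls exactly at $h(x)$.
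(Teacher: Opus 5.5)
Your forward implications match the paper's (the equality or inclusion is transferred through $\u\cdot$ and $\s\cdot$ via \Cref{oper}). Your reverse of (4) is also correct, by a different route. The paper combines $\midp\subseteq\medp^{\np}$ from \cite{vol93} with the relativized $\medp\subseteq\maxmcp$, which under $\pp^{\np}=\pp$ gives $\medp^{\np}\subseteq\maxmcp$. You instead run the $\medp\subseteq\maxmcp$ argument directly on $\midp$, with span counts in place of path counts: $mid_{M}(x)\geqslant y$ iff fewer distinct outputs lie below $y$ than at or above $y$. This is a downward-closed $\pp^{\np}$ predicate because $\spanp\subseteq\sp^{\np}$, and it avoids the citation. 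The reverse directions of (2) and (3), however, are not proved.

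In (2) you stop exactly at the step that matters, so nothing is established there. The missing idea is that your own identity $h(x)=l(x)-1+\#\{y\in[l(x),r(x)] : (x,y)\in A\}$ already finishes the job, with no $\medpb$ witness set to design. Since $A\in\spp$, there is $g\in\gapp$ with $g(x,y)=1$ for $(x,y)\in A$ and $g(x,y)=0$ otherwise. Hence $h(x)=l(x)-1+\sum_{l(x)\leqslant y\leqslant r(x)}g(x,y)$ lies in $\gapp$, by closure under exponential sums and under adding $\fp$ functions. Then $\gapp\subseteq\medpb$ by \cite{vol93}. The paper reaches $\gapp$ differently, via $\medp\subseteq\fp^{\pp}\subseteq\gapp^{\pp}\subseteq\gapp^{\spp}=\gapp$, which needs the lowness result $\gapp^{\spp}=\gapp$ of \cite{fen94}; your identity makes that unnecessary.

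In (3) the direct $\midp$ construction does not work as described. Making outputs with more ``yes'' answers larger is a $\maxp$ device: it helps only when the largest output is what matters, whereas in a set-median every distinct output counts. A single path of $N$ has many consistent answer vectors, since any true ``yes'' may be guessed ``no'', which also changes later queries. Each such vector yields its own output on whichever side its simulated decision points. So the number of distinct outputs under the correct and the incorrect child of a binary-search node is uncontrolled, and the per-node majority that drives the proof of $\maxmcp\subseteq\medp$ is lost. (An $\spp$-type machine also has gap $0$, i.e.\ a tie, on rejected inputs.) Handling guessed oracle answers inside a set-median is precisely the nontrivial content of $\sp^{\np}-\fp\subseteq\midp$ in \cite{vol93}, so invoking ``the standard technique'' does not discharge it. The repair is the same as in (2): $A\in\spp^{\np}$ gives a $0/1$-valued $g\in\gapp^{\np}$, hence $h\in\gapp^{\np}\subseteq\sp^{\np}-\fp\subseteq\midp$. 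This is also where the paper's chain $\medp\subseteq\gapp^{\pp}\subseteq\gapp^{\spp^{\np}}=\gapp^{\np}$ ends.
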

	
	\begin{proof}
		The implications from left to right follow from \Cref{oper} \cite{vol93}. $\spp^{\np} = \spp$ is equivalent to $\np \subseteq \spp$, because $\spp^{\spp} = \spp$ \cite{fen94}. 
		
		In this proof we also use $\gapp^{\spp} = \gapp$ \cite{fen94}, which can be relativized to any oracle \cite{mah94}. In particular, $\gapp^{\spp^{\np}} = \gapp^{\np}$. The proofs of $\gapp \subseteq \medpb$ and $\sp^{\np} - \fp \subseteq \midp$ are given in \cite{vol93}.
		
		If $\pp = \spp$, then $\medp \subseteq \fp^{\pp} \subseteq \gapp^{\pp} \subseteq \gapp^{\spp} \subseteq \gapp \subseteq \medpb$.
		
		If $\pp \subseteq \spp^{\np}$, then $\medp \subseteq \fp^{\pp} \subseteq \gapp^{\pp} \subseteq \gapp^{\spp^{\np}} \subseteq \gapp^{\np} \subseteq \sp^{\np} - \fp \subseteq \midp$.
		
		If $\pp^{\np} = \pp$, then $\medp^{\np} \subseteq \maxmcp$, because the proof of $\medp \subseteq \maxmcp$ relativizes. In \cite{vol93}, it is shown that $\midp \subseteq \medp^{\np}$. 
		\qed
	\end{proof}
	
	In \cite{kre88}, Krentel proved that $\fp^{\maxp[1]} = \fp^{\np}$, which implies that $\maxp \subseteq \fp^{\np}$. We show in the next proposition that $\maxp$ can be placed between two similar classes $\fp^{\np \cap \conp}$ and $\fp^{\np}$ and both inclusions are proper unless $\np = \conp$. By \Cref{npsv-fp}, $\fp^{\np \cap \conp}$ is equal to the class $\npsv$. In the next propositions, $\minp$ can be used instead of $\maxp$.
	
	\begin{Proposition}
		\begin{enumerate}
			\item $\npsv \subseteq \maxp \subseteq \fp^{\np}$.
			\item $\npsv = \maxp$ if and only if $\np = \conp$.
			\item $\maxp = \fp^{\np}$ if and only if $\np = \conp$.
		\end{enumerate}
	\end{Proposition}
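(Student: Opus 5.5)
For part 1, the inclusion $\maxp \subseteq \fp^{\np}$ follows from Krentel's $\fp^{\maxp[1]} = \fp^{\np}$; it can also be seen directly by binary search on the interval of possible outputs, asking at each step the $\np$ query ``does $M$ output some value $\geqslant y$ on input $x$?''. For $\npsv \subseteq \maxp$, take $f \in \npsv$ witnessed by an NPTM $N$. Build a metric NPTM $M$ that simulates $N$. On accepting paths $M$ outputs the value computed by $N$. On rejecting paths it outputs a number below every possible value of $f$, say $-2^{q(|x|)}$, where $q$ bounds the running time of $N$. Since every accepting path outputs $f(x)$ and at least one accepting path exists, $max_{M}(x) = f(x)$.

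For parts 2 and 3, I would prove the forward implications through the characteristic function of a language. Let $L \in \conp$ and define $\chi_{L}(x) = 1$ if $x \in L$ and $0$ otherwise. Then $\chi_{L} \in \maxp$: a metric machine outputs $1$ by default and outputs $0$ on a path guessing a witness for $x \notin L$. This fails, because the maximum would then be $1$ whenever any path outputs $1$. The fix is to use the complement. For $L \in \conp$ we have $1 - \chi_{L} = \chi_{\overline{L}}$ with $\overline{L} \in \np$, and $\chi_{\overline{L}} \in \maxp$: output $0$ on one branch and $1$ on branches that guess a valid $\np$ witness. So $\chi_{A} \in \maxp$ for every $A \in \np$, and hence every $\np$ language has its characteristic function in $\maxp$.

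If $\maxp = \npsv = \fp^{\np \cap \conp}$, then every $A \in \np$ is in $\p^{\np \cap \conp} = \np \cap \conp$, so $\np \subseteq \conp$ and therefore $\np = \conp$. For part 3, if $\maxp = \fp^{\np}$, then $\chi_{\overline{A}} \in \maxp$ for $A \in \np$, since $\chi_{\overline{A}} \in \fp^{\np}$. It remains to show that the characteristic function of a language $B$ in $\maxp$ implies $B \in \conp$. Here $x \notin B$ iff the maximum is $0$, i.e.\ no path outputs $1$, which is a $\conp$ condition. Thus $\overline{A} \in \conp$, so $A \in \conp$, and $\np = \conp$. The same argument gives the other inclusion: if $\chi_{B} \in \maxp$ then $B \in \np$. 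So in part 2 one could also reach the conclusion directly.

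For the reverse implications, assume $\np = \conp$. Then $\np \cap \conp = \np$, so $\npsv = \fp^{\np \cap \conp} = \fp^{\np}$, and part 1 squeezes $\maxp$ to equal both. This gives both ``if'' directions at once.

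The main obstacle is in part 1. The inclusion $\npsv \subseteq \maxp$ needs $N$ to have at least one accepting path on every input, which the definition of $\npsv$ guarantees. The rejecting-path outputs must also be strictly smaller than every value $f$ can take. This holds by choosing a bound below $-2^{q(|x|)}$, since accepting outputs have length at most $q(|x|)$. The rest is routine.
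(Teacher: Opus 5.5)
Your part 1 and your ``if'' directions match the paper's: small outputs on rejecting paths, Krentel or binary search, and then the squeeze when $\np = \conp$. Your ``only if'' directions rest on the same fact the paper uses, namely $\u \cdot \maxp = \np$. The paper applies it uniformly: $\npsv$ and $\fp^{\np}$ are both closed under $f \mapsto 1 - f$, so either equality would make $\np = \u \cdot \maxp$ closed under complement. Your part 2 route through $\p^{\np \cap \conp} = \np \cap \conp$ is also valid.

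The written argument for part 3, however, is wrong as stated.

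\begin{itemize}
\item You set out to prove that $\chi_B \in \maxp$ implies $B \in \conp$ for an arbitrary $B$. That is false unless $\np \subseteq \conp$, since you yourself showed $\chi_A \in \maxp$ for every $A \in \np$ (take $A = \mathrm{SAT}$).
\item Your justification actually proves the complementary statement. ``$x \notin B$ iff no path outputs $1$'' shows $\overline{B} \in \conp$, i.e.\ $B \in \np$.
\item The next step, ``$\overline{A} \in \conp$, so $A \in \conp$'', is a second non sequitur. $\overline{A} \in \conp$ only gives back $A \in \np$.
\end{itemize}

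The two slips cancel and the conclusion happens to be right, but the chain should be this. $\chi_B \in \maxp$ implies $B \in \np$, because $x \in B$ iff some path outputs $1$. Apply this to $B = \overline{A}$, using $\chi_{\overline{A}} \in \fp^{\np} = \maxp$, to get $\overline{A} \in \np$, i.e.\ $A \in \conp$. That is exactly the ``other inclusion'' in your last sentence, so keep that one and drop the $\conp$ version. Also remove the retracted claim that $\chi_L \in \maxp$ for every $L \in \conp$, which on its own is equivalent to $\np = \conp$.
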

	
	\begin{proof}
		$\npsv \subseteq \maxp$, because the machine can output sufficiently small values instead of rejecting. If $\np = \conp$, then  $\fp^{\np \cap \conp} = \maxp = \fp^{\np}$. If either $\npsv = \maxp$ or $\maxp = \fp^{\np}$, then for every function $f \in \maxp$ the function $1 - f$ also belongs to $\maxp$, which implies that the class $\np = \u \cdot \maxp$ is closed under complement. 
		\qed
	\end{proof}
	
	In the next propositions, we establish relations between $\maxp$ and classes $\midp$, $\medp$, and $\medpb$.
	
	\begin{Proposition}
		$\maxp \subseteq \medp \cap \midp$.
	\end{Proposition}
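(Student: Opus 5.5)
We prove the two inclusions $\maxp \subseteq \medp$ and $\maxp \subseteq \midp$ separately. In both cases the idea is to take a metric NPTM $M$ with $f = max_{M}$ and pad its outputs with a large block of dummy small values, so that the maximum output of $M$ becomes the median. Alternatively, $\maxp \subseteq \medp$ can be obtained from \Cref{medp} by going through $\maxmcp$ or $\kthos(\fp)$, and we will use whichever route is cleaner.

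For $\maxp \subseteq \medp$ we go through $\maxmcp$, which equals $\medp$ by \Cref{medp}. Let $f = max_{M}$, and choose $l, r \in \fp$ such that every output of $M$ on input $x$ lies in $[l(x), r(x)]$. Define
\[
A = \{(x,y) \mid M \text{ has an output } \geqslant y \text{ on input } x\}.
\]
Then $A \in \np \subseteq \pp$, and $A$ is downward closed, so it has the second property in the definition of $\maxmcp$. The largest $y \in [l(x), r(x)]$ with $(x,y) \in A$ is exactly $max_{M}(x)$, and the set $W$ is never empty because $M$ has at least one output. Hence $f \in \maxmcp = \medp$.

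A direct construction also works and avoids \Cref{medp}. Let the number of paths of $M$ on $x$ be at most $2^{q(|x|)}$. Build $M'$ as follows. $M'$ guesses a path $s$ of $M$ together with a string $z$ of length $q(|x|)+2$. If $z = 0\cdots0$, it outputs the output $t$ of $s$. Otherwise it outputs $l(x) - 1 - [t < u]$, where $u$ ranges over the outputs. This last step requires knowing the maximum, so the direct construction has to be set up more carefully than this. We therefore prefer the $\maxmcp$ route.

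For $\maxp \subseteq \midp$, the median is taken over distinct output values, so we cannot pad with multiplicities. Instead we create many distinct dummy values. Let $N$ guess a path $s$ of $M$ with output $t$. On that path, $N$ outputs $2t+1$ and also branches to output the values $2t$ and $2t - 2j$ for many indices $j$. This is messy, so the plan is to encode outputs as pairs. On the branch for path $s$ with output $t$, $N$ outputs the number $t \cdot 2^{m} + (2^{m}-1)$ and, for each $j < 2^{m}-1$, the number $c - (2^{m}-1) + j$, where $c$ is a constant below $l(x) \cdot 2^{m}$ and $2^{m}$ exceeds twice the number of distinct outputs of $M$.

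The outputs then split into two groups. The dummy block contributes exactly $2^{m}-1$ distinct values, all below every real value. The real values contribute at most $2^{q}$ distinct values, and these are ordered exactly like the outputs of $M$. Since the dummy block outnumbers the real values, the $\lfloor(k+1)/2\rfloor$-th element lies among the dummies, which is the wrong side.

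To fix this, we put the dummies above the real values and make their number exactly equal to the count of the real values, or one less. That count, however, is the number of distinct outputs of $M$, which is not known to $N$. This is the main obstacle: for $\midp$ the dummies must make the maximum the exact median without knowing $|W|$. The approach we would try first is to attach dummies to each output value $t$ itself. Each distinct real value $t$ spawns exactly one distinct dummy value above everything else, namely $d + t$, where $d$ exceeds all real values. This doubles the number of distinct values to $2k$. The median is then the $k$-th smallest, which is the largest real value $w_{k} = max_{M}(x)$. After rescaling the encoding, this output equals $f(x)$, which completes the proof.
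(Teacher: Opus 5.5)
Your final argument is correct, but only its $\midp$ half is the paper's proof. The paper uses one construction for both classes. Every path of $M$ with output $n$ is split into two paths that output $n$ and $n+m$, where $m$ depends only on $x$ and is large enough that every shifted value exceeds every original one. Your condition that $d$ \emph{exceeds all real values} is not quite enough when outputs can be negative; take $d > r(x)-l(x)$. For $\midp$ this is exactly where you end up: $k$ distinct values become $2k$, and the $k$-th of them is $w_k$.

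What you missed is that the same machine settles $\medp$ directly. If $M$ has $K$ paths, the new multiset has $2K$ elements, and the $K$ original outputs all lie below the $K$ shifted ones. So the element of index $\lfloor (2K+1)/2\rfloor = K$ is the largest original output. No count of dummies has to be known, which was the obstacle you kept running into.

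Your detour through $\maxmcp$, with $A=\{(x,y) \mid M \text{ has an output} \geqslant y\}\in\np\subseteq\pp$, is valid as a citation of \Cref{medp}. It buys nothing and costs a lot, though. An elementary inclusion now rests on the direction $\maxmcp\subseteq\medp$, which is the hardest part of that theorem. The binary-search sketch given for it is also fragile, because errors of $N$ can accumulate in the same direction. For example, suppose the correct answer is \emph{no} at two consecutive levels and $N$ has only a one-path majority each time. Then about three quarters of the leaves end strictly above $f(x)$, and the median overshoots.

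So the self-contained doubling argument is strictly preferable for both classes. You should also remove the abandoned attempts and the leftover remark about rescaling the encoding.
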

	
	\begin{proof}
		Let $M$ be a NPTM that witnesses $f \in \maxp$ for a function $f$. The required NPTM simulates $M$ and instead of outputting the value $n$ outputs the values $n$ and $n + m$ for a sufficiently large $m$ that depends only on the input.
		\qed
	\end{proof}

	\begin{Proposition}
		The following statements are equivalent:
		\begin{enumerate}
			\item $\medp = \maxp$.
			\item $\midp = \maxp$.
			\item $\medpb \subseteq \maxp$. 
			\item $\np = \pp$.
		\end{enumerate}
	\end{Proposition}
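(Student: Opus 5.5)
I will prove the equivalences by showing (1)$\Rightarrow$(4), (2)$\Rightarrow$(4), (3)$\Rightarrow$(4), and then (4)$\Rightarrow$ each of (1), (2), (3). The operator $\u$ handles the forward directions. It is standard that $\u \cdot \maxp = \np$: a function in $\maxp$ with values in $\{0,1\}$ equals $1$ iff some path outputs $1$. Conversely, for $L \in \np$ the metric machine outputs $1$ on accepting paths and $0$ otherwise.

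\textbf{Forward directions.} If $\medp = \maxp$, then \Cref{oper} gives $\pp = \u \cdot \medp = \u \cdot \maxp = \np$. If $\midp = \maxp$, then $\spp^{\np} = \u \cdot \midp = \np$. Since $\pp \subseteq \pp^{\np}$, I prefer the $\s$ operator here: $\s \cdot \maxp = \np$, because $f(x) > g(x)$ iff some path outputs a value greater than $g(x)$, which is an $\np$ condition. Then $\pp \subseteq \pp^{\np} = \s \cdot \midp = \s \cdot \maxp = \np$, so $\np = \pp$. If $\medpb \subseteq \maxp$, then $\pp = \s \cdot \medpb \subseteq \s \cdot \maxp = \np$. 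This uses that $\s$ is monotone, which is immediate.

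\textbf{Backward direction.} Assume $\np = \pp$. By the preceding proposition, $\maxp \subseteq \medp \cap \midp$, so it suffices to show $\medp \subseteq \maxp$ and $\midp \subseteq \maxp$; inclusion (3) then follows from $\medpb \subseteq \medp$.

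For $\medp$, I use \Cref{medp}: $\medp = \maxmcp$. So $h(x)$ is the largest $y \in [l(x), r(x)]$ with $(x,y) \in A$, where $A \in \pp = \np$ and the set of such $y$ is downward closed. A metric NPTM guesses $y$ together with an $\np$ witness for $(x,y) \in A$. It outputs $y$ if the witness verifies and $l(x)-1$ otherwise, so the maximum output is exactly $h(x)$. In fact, downward closure is not even needed for this step.

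For $\midp$, I want an analogous representation. Since $\pp = \np$ implies $\pp^{\np} = \pp^{\pp} \subseteq \p^{\sp}$, the cleanest route is to use Theorem \ref{med-inc}(4): $\pp^{\np} = \pp$ gives $\midp \subseteq \medp$. So I must check that $\np = \pp$ implies $\pp^{\np} = \pp$. By Toda's theorem, $\pp^{\np} \subseteq \p^{\pp}$. Under $\pp = \np$ we get $\p^{\pp} = \p^{\np}$, and the latter is contained in $\pp$ because $\pp$ is closed under $\p$-truth-table reductions but not obviously Turing reductions. Alternatively, $\np = \pp$ gives $\np = \conp$ (since $\pp$ is closed under complement), so $\p^{\np} = \np \cap \conp = \np = \pp$, with adaptivity handled because $\np = \conp$ lets an NP machine guess and verify all oracle answers. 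I expect this collapse step to be the main point needing care. Once it holds, $\midp \subseteq \medp \subseteq \maxp$.
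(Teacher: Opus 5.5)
Your proof is correct. The forward directions are essentially the paper's. The paper uses $\s\cdot\maxp = \np$ together with $\pp = \s\cdot\medpb = \s\cdot\medp \subseteq \s\cdot\midp$ for all three statements. Your use of $\u\cdot\medp = \pp$ and $\u\cdot\maxp = \np$ for (1) works equally well.

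The converse is where your route genuinely differs. The paper argues uniformly by an upper bound: all three median classes lie in $\fp^{\sp} = \fp^{\pp}$ (for $\midp$ this is Toda's bound). If $\np = \pp$ then $\np = \conp$, so by the preceding proposition $\fp^{\pp} = \fp^{\np} = \maxp$. Combined with $\maxp \subseteq \medp \cap \midp$, this settles (1)--(3) at once.

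You instead obtain $\medp \subseteq \maxp$ structurally from $\medp = \maxmcp$, by guessing $y$ together with an $\np$ certificate for $(x,y) \in A$. This is more elementary and bypasses both $\maxp = \fp^{\np}$ and the characterization $\npsv = \fp^{\np\cap\conp}$. The cost appears with $\midp$. There you need three further ingredients: Toda's theorem in the form $\pp^{\np} \subseteq \p^{\pp}$, the collapse $\p^{\np} = \np$ under $\np = \conp$, and Theorem~\ref{med-inc}(4). The last of these itself rests on $\midp \subseteq \medp^{\np}$ and the relativized $\maxmcp$ characterization. So that half stacks more machinery on a Toda-type result than the paper's direct appeal to $\midp \subseteq \fp^{\sp}$.

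One correction: delete the justification that $\p^{\np} \subseteq \pp$ ``because $\pp$ is closed under truth-table reductions''. That closure only covers nonadaptive queries, i.e.\ under $\np = \pp$ it gives $\p^{\np}_{tt} \subseteq \pp$, not the adaptive class $\p^{\np}$. Your second argument is the one that actually proves the collapse: with $\np = \conp$, an $\np$ machine guesses every oracle answer together with a certificate for it, so $\p^{\np} = \np = \pp$.
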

	
	\begin{proof}
		The first three statements imply $\pp = \np$, because $\np = \s \cdot \maxp$ and $\pp = \s \cdot \medpb = \s \cdot \medp \subseteq \s \cdot \midp$. If $\np = \pp$, then the classes $\medp$, $\midp$, and $\medpb$ are subsets of the class $\fp^{\pp} = \maxp = \fp^{\np \cap \conp}$.
		\qed
	\end{proof}
	
	The class $\medpb$ contains the class $\gapp$ \cite{vol93}, while the class $\midp$ also contains the classes $\spanp$ \cite{vol93} and $\maxp$. In the next proposition, we show that $\medpb$ is not likely to contain the classes $\spanp$ and $\maxp$. The equivalence between $\spanp \subseteq \gapp$ and $\np \subseteq \spp$ can be shown similarly \cite{iva26}. 
	
	\begin{Proposition} \label[Proposition]{np-spp}
		The following statements are equivalent:
		\begin{enumerate}
			\item $\spanp \subseteq \medpb$.
			\item $\maxp \subseteq \medpb$.
			\item $\np \subseteq \spp$.
		\end{enumerate}
	\end{Proposition}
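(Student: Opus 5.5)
Both (1) and (2) will be shown to imply (3) by applying the operator $\u$, and (3) will be shown to imply (1) and (2) by putting $\spanp$ and $\maxp$ inside $\gapp$ and then using $\gapp \subseteq \medpb$ \cite{vol93}.

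For (1) $\Rightarrow$ (3) and (2) $\Rightarrow$ (3), note that $\np \subseteq \u \cdot \maxp$. Given $L \in \np$ with a verifying NPTM, take a metric machine that outputs $1$ on accepting paths and $0$ on rejecting paths (adding one dummy path that outputs $0$). Its maximum is $1$ exactly when $x \in L$ and $0$ otherwise. The same machine viewed as a span function counts the distinct outputs among those that are $1$. So, outputting $1$ on accepting paths and a fixed symbol on the other paths, the span minus one is $1$ if $x \in L$ and $0$ otherwise. Hence $\np \subseteq \u \cdot \spanp$ as well, because subtracting the constant $1$ keeps us in $\spanp$: we simply drop the dummy output. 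If $\maxp \subseteq \medpb$ or $\spanp \subseteq \medpb$, then $\np \subseteq \u \cdot \medpb = \spp$ by \Cref{oper}.

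For (3) $\Rightarrow$ (1),(2), assume $\np \subseteq \spp$. Then $\p^{\np} \subseteq \spp$, since $\spp^{\spp} = \spp$ \cite{fen94}. Any $f \in \maxp$ lies in $\fp^{\np}$ by Krentel. I will write $f(x) = \sum_i 2^i b_i(x) - c(x)$, where $c \in \fp$ shifts out negative values and each bit language $\{(x,i) \mid \text{bit } i \text{ of } f(x)+c(x) \text{ is } 1\}$ is in $\p^{\np} \subseteq \spp$. Since $\spp = \u\cdot\gapp$, each bit has a $\gapp$ function equal to exactly $0$ or $1$. A weighted polynomial sum of such $\gapp$ functions, minus an $\fp$ function, is in $\gapp$ by closure of $\gapp$ under these operations, so $\maxp \subseteq \gapp \subseteq \medpb$.

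For $\spanp$ I use Köbler–Schöning–Torán: $\spanp \subseteq \sp^{\np}$. I will instead use the easier containment $\spanp \subseteq \fp^{\sp^{\np}[1]}$, or more directly argue that $\spanp = \sp$ when $\np \subseteq \spp$. Let $g(x)$ be the number of distinct outputs $y$ of $M$. Then $g(x) = \#\{y : \exists$ path outputting $y\}$, and the predicate $B = \{(x,y) \mid \exists\text{ path outputting } y\}$ is in $\np \subseteq \spp$. Let $h \in \gapp$ be the function equal to $1$ on $B$ and $0$ off it. Then $g(x) = \sum_{|y| \le q(|x|)} h(x,y)$, and $\gapp$ is closed under exponential sums, so $g \in \gapp \subseteq \medpb$.

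The main obstacle is the bit decomposition for $\maxp$: the query behaviour of the $\p^{\np}$ machine must be uniform so that each bit, indexed by $i$ up to a polynomial bound, defines a single $\spp$ language. This is routine, and then the closure of $\gapp$ under polynomially weighted sums finishes the argument.
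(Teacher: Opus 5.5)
Your proof is correct. The direction (1),(2) $\Rightarrow$ (3) is the paper's argument ($\np \subseteq \u\cdot\maxp$, $\np \subseteq \u\cdot\spanp$, $\u\cdot\medpb = \spp$). Your dummy-output detour for $\spanp$ is unnecessary: outputting $1$ on accepting paths and rejecting elsewhere already gives a $0/1$-valued span.

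For (3) $\Rightarrow$ (1),(2) you take a different route. The paper uses the one-line chain $\spanp \cup \maxp \subseteq \sp^{\np} \subseteq \gapp^{\spp} = \gapp \subseteq \medpb$, which rests on the lowness result $\gapp^{\spp} = \gapp$ of Fenner--Fortnow--Kurtz.

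For $\spanp$ you avoid lowness entirely. You only need that, under $\np \subseteq \spp$, an $\np$ predicate has a $0/1$-valued $\gapp$ characteristic function, and that $\gapp$ is closed under exponential sums. This makes your argument more elementary and self-contained; it is essentially a direct proof of $\np\subseteq\spp \Rightarrow \spanp\subseteq\gapp$, which the paper only mentions.

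For $\maxp$ you go through Krentel's $\maxp\subseteq\fp^{\np}$, the lowness $\spp^{\spp}=\spp$, and a bit decomposition. This still uses a lowness theorem. In exchange, your shift by $c(x)$ explicitly handles negative values, which the paper's inclusion $\maxp\subseteq\sp^{\np}$ glosses over.

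You could treat $\maxp$ exactly like $\spanp$ and drop Krentel and lowness altogether. If all outputs of $M$ lie in $[l(x),r(x)]$, then $\max_M(x) = l(x) + \sum_{l(x)<y\leqslant r(x)} \chi_B(x,y)$ with $B = \{(x,y) \mid \text{some path of } M \text{ outputs a value} \geqslant y\} \in \np$.

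One correction: drop the aside that you can ``argue that $\spanp = \sp$ when $\np\subseteq\spp$.'' That is not what your argument shows, and it is not known to hold: $\spanp = \sp$ is equivalent to $\np = \mathsf{UP}$ (K\"obler--Sch\"oning--Tor\'an). What you actually prove, and all you need, is $\spanp\subseteq\gapp$.
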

	
	\begin{proof}
		The first two statements imply $\np \subseteq \spp$, because $\np = \u \cdot \spanp = \u \cdot \maxp$ and $\spp = \u \cdot \medpb$. If $\np \subseteq \spp$, then $\spanp \cup \maxp \subseteq \sp^{\np} \subseteq \gapp^{\spp} \subseteq \gapp \subseteq \medpb$. \qed
	\end{proof}

	%\subsubsection{\ackname}
	
	%This work is an output of a research project (HSE-BR-2025-024) implemented as part of the Basic Research Program at HSE University.
	
	%
	% ---- Bibliography ----
	%
	% BibTeX users should specify bibliography style 'splncs04'.
	% References will then be sorted and formatted in the correct style.
	%
	\bibliographystyle{splncs04}
	\bibliography{outputs-bibliography}

\end{document}